\providecommand{\U}[1]{\protect\rule{.1in}{.1in}}
\newtheorem{theorem}{Theorem}[section]
\newtheorem{definition}[theorem]{Definition}
\newenvironment{proof}[1][Proof]{\noindent\textbf{#1.} }{\ \rule{0.5em}{0.5em}}
\begin{document}

\title{The Role of Visibility \ in Pursuit / Evasion Games}
\author{Athanasios Kehagias, Dieter Mitsche, and Pawe{\l } Pra{\l }at}
\date{\today}
\maketitle

\begin{abstract}
The \emph{cops-and-robber} (CR)\ game has been used in mobile robotics as a
discretized model (played on a graph $G$) of \emph{pursuit/evasion} problems.
The \textquotedblleft classic\textquotedblright\ CR version is a \emph{perfect
information game}: the cops' (pursuer's) location is always known to the
robber (evader) and vice versa. Many variants of the classic game can be
defined: the robber can be \emph{invisible} and also the robber can be either
\emph{adversarial} (tries to avoid capture) or \emph{drunk} (performs a random
walk). Furthermore, the cops and robber can reside in either nodes or edges of
$G$. Several of these variants are relevant as models or robotic pursuit /
evasion. In this paper, we first define carefully several of the variants
mentioned above and related quantities such as the \emph{cop number} and the
\emph{capture time}. Then we introduce and study the \emph{cost of visibility}
(\emph{COV}), a quantitative measure of the increase in difficulty (from the
cops' point of view) when the robber is invisible. In addition to our
theoretical results, we present algorithms which can be used to compute
capture times and COV of graphs which are analytically intractable. Finally,
we present the results of applying these algorithms to the numerical
computation of COV.

\end{abstract}

\section{Introduction\label{sec01}}

\emph{Pursuit / evasion} (PE) and related problems (search, tracking,
surveillance) have been the subject of extensive research in the last fifty
years and much of this research is connected to mobile robotics
\cite{chungsearch}. When the environment is represented by a
graph\footnote{For instance, a floorplan can be modeled as a graph, with nodes
corresponding to rooms and edges corresponding to doors. Similarly, a maze can
be represented by a graph with edges corresponding to tunnels and nodes
corresponding to intersections.}, the original PE\ problem is reduced to a
\emph{graph game }played between the pursuers and the evader.

In the current paper, inspired by Isler and Karnad's recent work
\cite{isler2008role}, we study the role of \emph{information }in
\emph{cops-and-robber} (CR)\ games, an important version of graph-based PE. By
\textquotedblleft information\textquotedblright\ we mean specifically the
players' \emph{location}. For example, we expect that when the cops know the
robber's location they can do better than when the robber is \textquotedblleft
invisible\textquotedblright. Our goal is to make \ precise the term
\textquotedblleft better\textquotedblright.

Reviews of \ the graph theoretic CR\ literature appear in
\cite{alspach2006searching,NowBook,fomin2008annotated}. In the
\textquotedblleft classical\textquotedblright\ CR\ variant
\cite{nowakowski1983vertex} it is assumed that the cops always know the
robber's location and vice versa. The \textquotedblleft
invisible\textquotedblright\ variant, in which the cops cannot see the robber
(but the robber always sees the cops) has received less attention in the graph
theoretic literature; among the few papers which treat this case we mention
\cite{isler2007randomized,isler2008role,kehagias2013cops,dereniowski2013zero}
and also \cite{adler2002randomized} in which \emph{both }cops and robber are invisible.

Both the visible and invisible CR\ variants are natural models for discretized
robotic PE\ problems; the connection has been noted and exploited relatively
recently \cite{isler2007randomized,isler2008role,vieira2009scalable2}. If it
is further assumed that the robber is not actively trying to avoid capture
(the case of \emph{drunk} robber) we obtain a \emph{one-player} graph game;
this model has been used quite often in mobile robotics
\cite{gerkey2005parallel,hollinger2009efficient,hollinger2010improving,lau2006probabilistic,sarmiento2003efficient}
and especially (when assuming random robber movement) in publications such as
\cite{hsu2008point,kurniawati2008sarsop,pineau2007POMDP,smith2004heuristic,spaan2005perseus}, 
which utilize \emph{partially observable Markov decision processes }(POMDP,
\cite{hauskrecht2000value,littman1996efficient,monahan1982survey}). For a more
general overview of pursuit/evasion and search problems in robotics, the
reader is referred to \cite{chungsearch}; some of the works cited in this
paper provide a useful background to the current paper.

This paper is structured as follows. In Section \ref{sec02} we present
preliminary material, notation and the definition of the \textquotedblleft
classical\textquotedblright\ CR game; we also introduce several node and edge
CR \emph{variants}. In Section \ref{sec03} we define rigorously the \emph{cop
number} and \emph{capture time} for the classical CR game and the previously
introduced CR\ variants. In Section \ref{sec04} we study the \emph{cost of
visibility} (\emph{COV}). In Section \ref{sec05} we present algorithms which
compute capture time and optimal strategies for several CR\ variants. In
Section \ref{sec06} we further study COV using computational experiments.
Finally, in Section \ref{sec07} we summarize and present our conclusions.

\section{Preliminaries\label{sec02}}

\subsection{Notation}

\begin{enumerate}
\item We use the following notations for sets:$\ \mathbb{N}$ denotes $\left\{
1,2,\ldots\right\}  $; $\mathbb{N}_{0}$ denotes $\left\{  0,1,2,\ldots
\right\}  $; $\left[  K\right]  $ denotes $\left\{  1,\ldots,K\right\}  $;
$A-B=\left\{  x:x\in A,x\notin B\right\}  $; $\left\vert A\right\vert $
denotes the \emph{cardinality} of $A$ (i.e., the number of its elements).

\item A \emph{graph} $G=(V,E)$ consists of a \emph{node set} $V$ and an
\emph{edge set} $E$, where every $e\in E$ has the form $e=\left\{
x,y\right\}  \subseteq V$. In other words, we are concerned with finite,
undirected, simple graphs; in addition we will always assume that $G$ is
connected and that $G$ contains $n$ nodes:\ $\left\vert V\right\vert =n$.
Furthermore, we will assume, without loss of generality, that the node set is
$V=\left\{  1,2,...,n\right\}  $. We let $V^{K}=\underset{K\text{ times}%
}{\underbrace{V\times V\times\ldots\times V}}$. We also define $V_{D}%
^{2}\subseteq V^{2}$ by $V_{D}^{2}=\{(x,x):x\in V\}$ (it is the set of
\textquotedblleft diagonal\textquotedblright\ node pairs).

\item A \emph{directed graph} (\emph{digraph}) $G=(V,E)$ consists of a
\emph{node set} $V$ and an \emph{edge set} $E$, where every $e\in E$ has the
form $e=\left(  x,y\right)  \in V\times V$. In other words, the edges of a
digraph are \emph{ordered} pairs.

\item In graphs, the (\emph{open}) \emph{neighborhood} of some $x\in V$ is
$N\left(  x\right)  =\left\{  y:\left\{  x,y\right\}  \in E\right\}  $; in
digraphs it is $N\left(  x\right)  =\left\{  y:\left(  x,y\right)  \in
E\right\}  $. In both cases, the \emph{closed neighborhood} of $x\ $is
$N\left[  x\right]  =N\left(  x\right)  \cup\left\{  x\right\}  $.

\item Given a graph $G=\left(  V,E\right)  $, its \emph{line graph} $L\left(
G\right)  =\left(  V^{\prime},E^{\prime}\right)  $ is defined as follows: the
node set is $V^{\prime}=E$, i.e., it has one node for every edge of $G$; the
edge set is defined by having the nodes $\left\{  u,v\right\}  ,\left\{
x,y\right\}  \in V^{\prime}$ connected by an edge $\left\{  \left\{
u,v\right\}  ,\left\{  x,y\right\}  \right\}  $ if and only if $\left\vert
\left\{  u,v\right\}  \cap\left\{  x,y\right\}  \right\vert =1$ (i.e., if the
original edges of $G$ are adjacent).

\item We will write $f\left(  n\right)  =o\left(  g\left(  n\right)  \right)
$ if and only if $\lim_{n\rightarrow\infty}\frac{f\left(  n\right)  }{g\left(
n\right)  }=0$. Note that in this \emph{asymptotic }notation $n$ denotes the
parameter with respect to which asymptotics are considered. So in later
sections we will write $o\left(  n\right)  $, $o\left(  M\right)  $ etc.
\end{enumerate}

\subsection{The CR\ Game Family}

The \textquotedblleft classical\textquotedblright\ CR\ game can be described
as follows. Player C controls $K$ cops (with $K\geq1$) and player R\ controls
a single robber. Cops and robber are moved along the edges of a graph
$G=\left(  V,E\right)  $ in discrete time steps $t\in\mathbb{N}_{0}$.\ \ At
time $t$, the robber's location is $Y_{t}\in V$ and the cops' locations are
$X_{t}=(X_{t}^{1},X_{t}^{2},\ldots,X_{t}^{K})\in V^{K}$ (for $t\in
\mathbb{N}_{0}$ and $k\in\left[  K\right]  $). The game is played in
\emph{turns}; in the 0-th turn first C places the cops on nodes of the graph
and then R places the robber; in the $t$-th turn, for $t>0$, \emph{first} C
moves the cops to $X_{t}$ and \emph{then} R moves the robber to $Y_{t}$. Two
types of moves are allowed:\ (a)\ sliding along a single edge and (b)\ staying
in place; in other words, for all $t$ and $k$, either $\{X_{t-1}^{k},X_{t}%
^{k}\}\in E$ or $X_{t-1}^{k}=X_{t}^{k}$; similarly, $\{Y_{t-1},Y_{t}\}\in E$
or $Y_{t-1}=Y_{t}$. The cops win if they \emph{capture} the robber, i.e., if
there exist $t\in\mathbb{N}_{0}$ and $k\in\left[  K\right]  $ such that
$Y_{t}=X_{t}^{k}$; the robber wins if for all $t\in\mathbb{N}_{0}$ and
$k\in\left[  K\right]  $ we have $Y_{t}\neq X_{t}^{k}$. In what follows we
will describe these eventualities by the following \textquotedblleft shorthand
notation\textquotedblright: $Y_{t}\in X_{t}$ and $Y_{t}\notin X_{t}$ (i.e., in
this notation we consider $X_{t}$ as a \emph{set} of cop positions).

In the classical game both C and R\ are \emph{adversarial}:\ C plays to effect
capture and R plays to avoid it. But there also exist \textquotedblleft%
\emph{drunk robber}\textquotedblright\ versions, in which the robber simply
performs a \emph{random walk }on $G$ such that, for all $\forall u,v\in V$ we
have%
\begin{equation}
\Pr\left(  Y_{0}=u\right)  =\frac{1}{n}\qquad\text{ and }\qquad\Pr\left(
Y_{t+1}=u|Y_{t}=v\right)  =\left\{
\begin{array}
[c]{ll}%
\frac{1}{\left\vert N\left(  v\right)  \right\vert } & \text{if and only if
}u\in N(v)\\
0 & \text{otherwise}%
\end{array}
\right.  .\label{eq022a}%
\end{equation}
In this case we can say that no R player is present (or, following a common
formulation, we can say that the R player is \textquotedblleft
Nature\textquotedblright).

If an R player exists, the cops' locations are always known to him; on the
other hand, the robber can be either \emph{visible} (his location is known to
C) or \emph{invisible }(his location is unknown). Hence we have four different
CR\ variants, as detailed in Table \ref{table01}.

\begin{table}[H]
\begin{center}%
\begin{tabular}
[c]{|l|l|}\hline
Adversarial Visible Robber & av-CR\\\hline
Adversarial Invisible Robber & ai-CR\\\hline
Drunk Visible Robber & dv-CR\\\hline
Drunk Invisible Robber & di-CR\\\hline
\end{tabular}
\end{center}
\caption{Four variants of the CR\ game.}%
\label{table01}%
\end{table}

In all of the above CR\ variants both cops and robber move from node to node.
This is a good model for entities (e.g., robots)\ which move from room to room
in an indoor environment. There also exist cases (for example moving in a maze
or a road network)\ where it makes more sense to assume that both cops and
robber move from edge to edge. We will call the classical version of the edge
CR game \emph{edge av-CR};\ it has attracted attention only recently
\cite{dudek2013cops}. Edge ai-CR, dv-CR and di-CR variants are also possible,
in analogy to the node versions listed in Table \ref{table01}. Each of these
cases can be reduced to the corresponding node variant, with the edge game
taking place on the line graph $L\left(  G\right)  $ of $G$.

\section{Cop Number and Capture Time\label{sec03}}

Two graph parameters which can be obtained from the av-CR game are the
\emph{cop number} and the \emph{capture time}. In this section we will define
these quantities in game theoretic terms 
\footnote{While this approach is not common in the
CR\ literature, we believe it offers certain advantages in clarity of
presentation.}
and also consider their extensions to
other CR variants. Before examining each of these CR variants in detail, let us
mention a particular modification which we will apply to all of them. Namely,
we assume that (every variant of) the CR game is played for an \emph{infinite
number of rounds}. This is obviously the case if the robber is never captured;
but we also assume that, in case the robber is captured at some time $t^{\ast
}$, the game continues for $t\in\{t^{\ast}+1,t^{\ast}+2,\ldots\}$ with the
following restriction:\ for all $t\geq t^{\ast}$, we have $Y_{t}%
=X_{t}^{k^{\ast}}$ (where $k^{\ast}$ is the number of cop who effected the
capture)\footnote{This modification facilitates the game theoretic analysis
presented in the sequel. Intuitively, it implies that after capture, the
$k^{*}$-th cop forces the robber to \textquotedblleft follow\textquotedblright%
\ him.}.

\subsection{\noindent\textbf{The Node av-CR Game\label{sec0301}}}

We will define cop number and capture time in game theoretic terms. To this
end we must first define \emph{histories} and \emph{strategies}.

A particular instance of the CR\ game can be fully described by the sequence
of cops and robber locations; these locations are fully determined by the C
and R\ moves. So, if we let $x_{t}\in V^{K}$ (resp. $y_{t}\in V$) denote the
nodes into which C (resp. R)\ places the cops (resp. the robber)\ at time $t$,
then a \emph{history} is a sequence $x_{0}y_{0}x_{1}y_{1}\ldots$ . Such a
sequence can have finite or infinite length; we denote the set of all finite
length histories by $H_{\ast}^{\left(  K\right)  }$; note that there exists an
infinite number of finite length sequences. By convention $H_{\ast}^{\left(
K\right)  }$ also includes the \emph{zero-length} or \emph{null} history,
which is the \emph{empty sequence}\footnote{This corresponds to the beginning
of the game, when neither player has made a move, just before C\ places the
cops on $G$.}, denoted by $\lambda$.\ Finally, we denote the set of all
infinite length histories by $H_{\infty}^{\left(  K\right)  }$.

Since both cops and robber are visible and the players move sequentially,
av-CR is a game of \emph{perfect information};\emph{ }in such a game C loses
nothing by limiting himself to \emph{pure }(i.e.,
deterministic)\ \emph{strategies }\cite{kuhn1950extensive}. A pure cop
strategy is a function $s_{C}:H_{\ast}^{\left(  K\right)  }\rightarrow V^{K}$;
a pure robber strategy is a function $s_{R}:H_{\ast}^{\left(  K\right)
}\rightarrow V$. In both cases the idea is that, given a finite length
history, the strategy produces the next cop or robber move\footnote{Note the
dependence on $K$, the number of cops.}; for example, when the robber strategy
$s_{R}$ receives the input $x_{0}$, it will produce the output $y_{0}%
=s_{R}\left(  x_{0}\right)  $; when it receives $x_{0}y_{0}x_{1}$, it will
produce $y_{1}=s_{R}\left(  x_{0}y_{0}x_{1}\right)  $ and so on. We will
denote the set of all legal cop strategies by $\mathbf{S}_{C}^{\left(
K\right)  }$ and the set of all legal robber strategies by $\mathbf{S}_{R}^{\left(
K\right)  }$; a strategy is \textquotedblleft legal\textquotedblright\ if it
only provides moves which respect the CR\ game rules. The set $\widetilde
{\mathbf{S}}_{C}^{\left(  K\right)  }\subseteq\mathbf{S}_{C}^{\left(
K\right)  }$ (resp. $\widetilde{\mathbf{S}}_{R}^{\left(  K\right)  }%
\subseteq\mathbf{S}_{R}^{\left(  K\right)  }$ ) is the set of
\emph{memoryless} legal cop (resp. robber)\ strategies, i.e., strategies which
only depend only on the current cops and robber positions; we will denote the
memoryless strategies by Greek letters, e.g., $\sigma_{C}$, $\sigma_{R}$ etc.
In other words%
\begin{align*}
\sigma_{C} &  \in\widetilde{\mathbf{S}}_{C}^{\left(  K\right)  }%
\Rightarrow\left[  \forall t:x_{t+1}=\sigma_{C}\left(  x_{0}y_{0}\ldots
.x_{t}y_{t}\right)  =\sigma_{C}\left(  x_{t}y_{t}\right)  \right]  ,\\
\sigma_{R} &  \in\widetilde{\mathbf{S}}_{R}^{\left(  K\right)  }%
\Rightarrow\left[  \forall t:y_{t+1}=\sigma_{R}\left(  x_{0}y_{0}\ldots
.x_{t}y_{t}x_{t+1}\right)  =\sigma_{R}\left(  y_{t}x_{t+1}\right)  \right]  .
\end{align*}

It seems intuitively obvious that both C\ and R\ lose nothing by playing with
memoryless strategies (i.e., computing their next moves based on the current
position of the game, not on its entire history). This is true but requires a
proof. One approach to this proof is furnished in
\cite{bonatogeneral,hahn2006note}. But we will present another proof by
recognizing that the CR\ game belongs to the extensively researched family of
\emph{reachability games} \cite{berwanger,mazala2002infinite}.

A reachability game is played by two players (Player 0 and Player 1)\ on a
\emph{digraph} $\overline{G}=\left(  \overline{V},\overline{E}\right)  $; each
node $v\in\overline{V}$ is a \emph{position} and each edge is a \emph{move};
i.e., the game moves from node to node (position) along the edges of the
digraph. The game is described by the tuple $\left(  \overline{V}%
_{0},\overline{V}_{1},\overline{E},\overline{F}\right)  $, where $\overline
{V}_{0}\cup\overline{V}_{1}=\overline{V}$, $\overline{V}_{0}\cap\overline
{V}_{1}=\emptyset$ and $\overline{F}\subseteq\overline{V}$. For $i\in\left\{
0,1\right\}  $, $\overline{V}_{i}$ is the set of positions (nodes)\ from which
the $i$-th Player makes the next move; the game terminates with a win for
Player 0 if and only if a move takes place into a node $v\in\overline{F}$ (the
\emph{target set} of Player 0); if this never happens, Player 1
wins\footnote{Here is a more intuitive description of the game: each move
consists in sliding a \emph{token} from one digraph node to another, along an
edge; the $i$-th player slides the token if and only if it is currently
located on a node $v\in\overline{V}_{i}$ ($i\in\left\{  0,1\right\}  $);
Player 0 wins if and only if the token goes into a node $u\in\overline{F}$;
otherwise Player 1 wins. }. The following is well known
\cite{berwanger,mazala2002infinite}.

\begin{theorem}
\label{prp0301a}Let $\left(  \overline{V}_{0},\overline{V}_{1},\overline
{E},\overline{F}\right)  $ be a reachability game on the digraph $\overline{D}=\left(
\overline{V},\overline{E}\right)  $. Then $\overline{V}$ can be partitioned
into two sets $\overline{W}_{0}$ and $\overline{W}_{1}$ such that (for
$i\in\left\{  0,1\right\}  $)\ player $i$ has a \emph{memoryless} strategy
$\sigma_{i}$ which is winning whenever the game starts in $u\in\overline
{W}_{i}$.
\end{theorem}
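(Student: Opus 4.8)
The plan is to prove this via the standard \emph{attractor} construction, which simultaneously exhibits the partition and the memoryless strategies. First I would define, for Player $0$, an increasing sequence of sets capturing the positions from which Player $0$ can force the token into $\overline{F}$ within a bounded number of moves. Set $A_0 = \overline{F}$ and, recursively,
\[
A_{i+1} = A_i \cup \{v \in \overline{V}_0 : \exists\, (v,w) \in \overline{E},\ w \in A_i\} \cup \{v \in \overline{V}_1 : \forall\, (v,w) \in \overline{E},\ w \in A_i\}.
\]
The intuition is that a Player-$0$ node is added as soon as it has \emph{some} successor already in the set (Player $0$ chooses that move), while a Player-$1$ node is added only once \emph{all} its successors lie in the set (Player $1$ cannot escape). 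Since $\overline{V}$ is finite and $A_0 \subseteq A_1 \subseteq \cdots$, the sequence stabilizes at some $A_m$; I would define $\overline{W}_0 = \bigcup_i A_i = A_m$ (the \emph{attractor} of $\overline{F}$) and $\overline{W}_1 = \overline{V} \setminus \overline{W}_0$.

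Next I would construct Player $0$'s memoryless winning strategy on $\overline{W}_0$ using a rank function. For $v \in \overline{W}_0$, let $\mathrm{rk}(v)$ be the least index $i$ with $v \in A_i$; nodes in $\overline{F}$ have rank $0$. For each $v \in \overline{V}_0 \cap (\overline{W}_0 \setminus \overline{F})$ the definition guarantees a successor $w$ with $\mathrm{rk}(w) < \mathrm{rk}(v)$, and I set $\sigma_0(v) = w$; dually, every successor of a Player-$1$ node of rank $i$ has rank at most $i-1$. A routine induction on the rank then shows that, whatever Player $1$ does, every play consistent with $\sigma_0$ and starting in $\overline{W}_0$ strictly decreases the rank at each step, hence reaches a node of $\overline{F}$ in at most $\mathrm{rk}(v_0)$ moves; so Player $0$ wins.

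It remains to handle $\overline{W}_1$, and this is where the crux lies: I must verify that $\overline{W}_1$ is a \emph{trap} for Player $0$, i.e.\ that Player $1$ can keep the token inside $\overline{W}_1$ forever. The key observation is that $\overline{W}_0 = A_m$ is a fixed point of the construction, so its complement has the dual closure property. Concretely, if $v \in \overline{V}_0 \cap \overline{W}_1$ then \emph{every} successor of $v$ lies in $\overline{W}_1$ (otherwise $v$ would have entered $A_{m+1} = A_m$), while if $v \in \overline{V}_1 \cap \overline{W}_1$ then \emph{some} successor lies in $\overline{W}_1$ (otherwise all successors would be in $\overline{W}_0$, again forcing $v \in A_{m+1} = \overline{W}_0$). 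I would let $\sigma_1$ pick, for each $v \in \overline{V}_1 \cap \overline{W}_1$, one such in-$\overline{W}_1$ successor. Then any play from $\overline{W}_1$ consistent with $\sigma_1$ never leaves $\overline{W}_1$; since $\overline{F} \subseteq \overline{W}_0$ is disjoint from $\overline{W}_1$, the token never enters $\overline{F}$ and Player $1$ wins.

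The main obstacle, besides getting this trap/closure duality exactly right, is a modeling caveat about sink nodes: a Player-$1$ node with \emph{no} outgoing edge vacuously satisfies the universal condition and is pulled into the attractor, whereas a Player-$0$ node not in $\overline{F}$ with no successors can never enter any $A_i$ and so stays in $\overline{W}_1$, leaving $\sigma_0$ with nothing to do and the game unable to move. I would dispose of this by assuming (as is standard for these games, and as holds for the CR reduction, where \textquotedblleft staying in place\textquotedblright\ is always a legal move) that every position has at least one outgoing edge, equivalently by adjoining self-loops, so that both $\sigma_0$ and $\sigma_1$ are total. With that convention the construction above yields the desired partition $\overline{V} = \overline{W}_0 \cup \overline{W}_1$ together with the memoryless winning strategies $\sigma_0, \sigma_1$.
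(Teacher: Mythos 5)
Your proof is correct, and it is essentially the canonical argument: the paper itself offers no proof of Theorem \ref{prp0301a}, simply citing it as well known from the infinite-games literature, and the proof in those references is exactly the attractor construction you describe (attractor of $\overline{F}$, rank-decreasing memoryless strategy for Player 0, and the trap/closure duality giving Player 1's memoryless strategy on the complement). Two small remarks. First, your stabilization step uses finiteness of $\overline{V}$, which the theorem statement never asserts; this is harmless in context, since the CR game digraph built from a finite graph $G$ is finite, but for the general statement one would iterate the attractor transfinitely, after which your rank and trap arguments go through unchanged. Second, your caveat about sink nodes is a genuine modeling point, and it is resolved in the CR reduction exactly as you say: staying in place is always a legal move, so every position of the game digraph has at least one outgoing edge and both $\sigma_0$ and $\sigma_1$ are total.
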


We can convert the av-CR game with $K$ cops to an equivalent reachability
game which is played on the \emph{CR\ game digraph}. In this digraph every
node corresponds to a \emph{position} of the original CR\ game; a (directed)
edge from node $u$ to node $v$ indicates that it is possible to get from
position $u$ to position $v$ in a single move. The CR\ game digraph has three
types of nodes.

\begin{enumerate}
\item Nodes of the form $u=\left(  x,y,p\right)  $ correspond to positions (in
the original CR\ game) with the cops located at $x\in V^{K}$, the robber at
$y\in V$ and player $p\in\left\{  C,R\right\}  $ being next to move.

\item There is single node $u=\left(  \lambda,\lambda,C\right)  $ which
corresponds to the starting position of the game: neither the cops nor the
robber have been placed on $G$; it is C's turn to move (recall that $\lambda$
denotes the empty sequence).

\item Finally, there exist $n$ nodes of the form $u=\left(  x,\lambda
,R\right)  $: the cops have just been placed in the graph (at positions $x\in
V^{K}$) but the robber has not been placed yet; it is R's turn to move.
\end{enumerate}

\noindent Let us now define%
\begin{align*}
\overline{V}_{0}^{\left(  K\right)  } &  =\left\{  \left(  x,y,C\right)  :x\in
V^{K}\cup\left\{  \lambda\right\}  ,y\in V\cup\left\{  \lambda\right\}
\right\}  ,\\
\overline{V}_{1}^{\left(  K\right)  } &  =\left\{  \left(  x,y,R\right)  :x\in
V^{K}\cup\left\{  \lambda\right\}  ,y\in V\cup\left\{  \lambda\right\}
\right\}  \\
\overline{V}^{\left(  K\right)  } &  =\overline{V}_{0}^{\left(  K\right)
}\cup\overline{V}_{1}^{\left(  K\right)  }%
\end{align*}
and let $\overline{E}^{\left(  K\right)  }$ consist of all pairs $\left(
u,v\right)  $ where $u,v\in\overline{V}^{\left(  K\right)  }$ and the move
from $u$ to $v$ is legal. Finally, we recognize that C's target set is
\[
\overline{F}^{\left(  K\right)  }=\left\{  \left(  x,y,p\right)  :x\in
V^{K},y\in\left(  V\cap x\right)  ,p\in\left\{  C,R\right\}  \right\}  .
\]
i.e., the set of all positions in which the robber is in the same node as at
least one cop.

With the above definitions, we have mapped the classical CR\ game (played with
$K$ cops on the graph $G$) to the reachability game $\left(  \overline{V}%
_{0}^{\left(  K\right)  },\overline{V}_{1}^{\left(  K\right)  },\overline
{E}^{\left(  K\right)  },\overline{F}^{\left(  K\right)  }\right)  $. By
Theorem \ref{prp0301a}, Player $i$ (with $i\in\left\{  0,1\right\}  $)\ will
have a \emph{winning set} $\overline{W}_{i}^{\left(  K\right)  }%
\subseteq\overline{V}^{\left(  K\right)  }$, i.e., a set with the following
property:\ whenever the reachability game starts at some $u\in\overline{W}%
_{i}^{\left(  K\right)  }$, then Player $i$ has a winning strategy (it may be
the case, for specific  $G$ and $K$ that either of $\overline{W}%
_{0}^{\left(  K\right)  }$, $\overline{W}_{1}^{\left(  K\right)  }$ is empty).
Recall that in our formulation of CR\ as a reachability  game, Player 0 is C.
In reachability terms, the statement \textquotedblleft C\ has a winning
strategy in the classical CR\ game\textquotedblright\ translates to
\textquotedblleft$\left(  \lambda,\lambda,C\right)  \in$ $\overline{W}%
_{0}^{\left(  K\right)  }$\textquotedblright\ and, for a given graph $G$, the
validity of this statement will in general depend on $K$ . It is clear that
$\overline{W}_{0}^{\left(  K\right)  }$ is increasing with $K$:
\begin{equation}
K_{1}\leq K_{2}\Rightarrow\overline{W}_{0}^{\left(  K_{1}\right)  }%
\subseteq\overline{W}_{0}^{\left(  K_{2}\right)  }.\label{eq0341}%
\end{equation}
It is also also clear that
\begin{equation}
\text{\textquotedblleft}\left(  \lambda,\lambda,C\right)  \in\overline{W}%
_{0}^{\left(  \left\vert V\right\vert \right)  }\text{\textquotedblright\ is
true for every }G=\left(  V,E\right)  \text{,}\label{eq0342}%
\end{equation}
because, if C has $\left\vert V\right\vert $ cops, he can place one in every
$u\in V$ and win immediately\footnote{In fact, for $K=\left\vert V\right\vert
$, we have $\overline{W}_{0}^{\left(  \left\vert V\right\vert \right)
}=\overline{V}^{\left(  K\right)  }$, because from every position $\left(
x,y,p\right)  $, C\ can move the cops so that one cop resides in each $u\in
V$, which guarantees immediate capture.}. 

Based on (\ref{eq0341})\ and (\ref{eq0342}) we can define the \emph{cop
number} of $G$ to be the minimum number of cops that guarantee capture; more
precisely we have the following definition (which is equivalent to the
\textquotedblleft classical\textquotedblright\ definition of cop number
\cite{aigner1984game}).

\begin{definition}
\label{prp0302a}The \emph{cop number} of $G$ is
\[
c\left(  G\right)  =\min\left\{  K:\left(  \lambda,\lambda,C\right)
\in\overline{W}_{0}^{\left(  K\right)  }\right\}  .
\]

\end{definition}

While a cop winning strategy $s_{C}$ guarantees that the token will go into
(and remain in)\ $\overline{F}^{\left(  K\right)  }$, we still do not know
how long it will take for this to happen. However, it is easy to prove that,
if $K\geq c(G)$ and C uses a \emph{memoryless} winning strategy, then no game
position will be repeated until capture takes place. Hence the following holds.

\begin{theorem}
\label{prp0302b}For every $G$, let $K\geq c\left(  G\right)  $ and consider
the CR\ game played on $G$ with $K$ cops. There exists a a memoryless cop
winning strategy $\sigma_{C}$ and a number $\overline{T}\left(  K;G\right)
<\infty$ such that, for every robber strategy $s_{R}$, C\ wins in no more than
$\overline{T}\left(  K;G\right)  $ rounds.
\end{theorem}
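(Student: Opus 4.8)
The plan is to prove the statement by bounding the number of distinct game positions that can occur before capture, when C plays a fixed memoryless winning strategy. The key observation, already hinted at in the excerpt, is that if $K \geq c(G)$, then by Definition \ref{prp0302a} we have $(\lambda,\lambda,C) \in \overline{W}_0^{(K)}$, and by Theorem \ref{prp0301a} Player $0$ (i.e., C) possesses a \emph{memoryless} winning strategy $\sigma_C$ from every position in $\overline{W}_0^{(K)}$. I would fix this $\sigma_C$ once and for all. The crux of the argument is then that, under a memoryless strategy, no game position in $\overline{V}^{(K)} \setminus \overline{F}^{(K)}$ can be visited twice before the token enters $\overline{F}^{(K)}$.

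First I would make the repetition argument precise. Suppose, for contradiction, that for some robber strategy $s_R$ the resulting play against $\sigma_C$ revisits a position $u \in \overline{V}^{(K)} \setminus \overline{F}^{(K)}$ at two distinct times $t_1 < t_2$ without capture having occurred in between. Since $\sigma_C$ is memoryless, C's responses depend only on the current position, so from time $t_2$ onward C behaves exactly as it did from time $t_1$. The robber can therefore replay whatever moves it made on the segment from $t_1$ to $t_2$, forcing the play to cycle through the same finite block of capture-free positions forever. This produces an infinite play that never enters $\overline{F}^{(K)}$, contradicting the fact that $\sigma_C$ is winning from $u$ (note $u \in \overline{W}_0^{(K)}$, since winning positions are closed under C's optimal moves along any branch consistent with a winning strategy). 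Hence no capture-free position repeats.

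With no repetition established, the bound follows by a counting argument. The set of positions is $\overline{V}^{(K)} = \overline{V}_0^{(K)} \cup \overline{V}_1^{(K)}$, which is finite: there are at most $2(|V|+1)^{K+1}$ nodes in the CR game digraph once we account for the $C$/$R$ flag and the $\lambda$ placeholders. Since every capture-free position is visited at most once, capture must occur within a number of rounds bounded by $|\overline{V}^{(K)}|$. I would therefore set
\[
\overline{T}(K;G) = \left| \overline{V}^{(K)} \setminus \overline{F}^{(K)} \right| < \infty,
\]
and conclude that for every robber strategy $s_R$, the play against $\sigma_C$ reaches $\overline{F}^{(K)}$ in at most $\overline{T}(K;G)$ rounds. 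Because $s_R$ was arbitrary, this is a uniform bound, which is exactly what the theorem asserts.

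I expect the main obstacle to be making the replay/cycling argument fully rigorous, specifically verifying that the revisited position $u$ genuinely lies in $\overline{W}_0^{(K)}$ so that the contradiction with $\sigma_C$ being winning is legitimate. This requires the standard but slightly delicate fact that, along any play consistent with a memoryless winning strategy, every position encountered before reaching the target remains in the winning region $\overline{W}_0^{(K)}$; this is a consequence of the characterization of $\overline{W}_0^{(K)}$ as an attractor set and of $\sigma_C$ being a positional winning strategy on it. Once that invariance is in hand, the repetition and counting steps are routine.
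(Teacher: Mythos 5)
Your proposal is correct and follows essentially the same route as the paper, which merely remarks that if $K\geq c(G)$ and C uses a memoryless winning strategy then no game position can repeat before capture, and then concludes the finite bound from the finiteness of the position set $\overline{V}^{(K)}$. The one simplification worth noting: the winning-region-invariance point you flag as the main obstacle can be bypassed entirely, since your cycling robber strategy already produces an infinite capture-free play from the start of the game, contradicting directly that $\sigma_C$ is winning from $\left(\lambda,\lambda,C\right)\in\overline{W}_0^{(K)}$.
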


Let us now turn from winning to \emph{time optimal} strategies. To define
these, we first  define the \emph{capture time}, which will serve as the CR
\emph{payoff function}. 

\begin{definition}
\label{prp0302d}Given a graph $G$, some $K\in\mathbb{N}$ and strategies
$s_{C}\in\mathbf{S}_{C}^{\left(  K\right)  }$, $s_{R}\in\mathbf{S}%
_{R}^{\left(  K\right)  }$ the \emph{av-CR capture time} is defined by%
\begin{equation}
T^{\left(  K\right)  }\left(  s_{C},s_{R}|G\right)  =\min\left\{  t:\exists
k\in\left[  K\right]  \text{ such that }Y_{t}=X_{t}^{k}\right\}  ;
\label{eq0301}%
\end{equation}
in case capture never takes place, we let $T^{\left(  K\right)  }\left(
s_{C},s_{R}|G\right)  =\infty$.
\end{definition}

We will assume that R's \emph{payoff }is $T^{\left(  K\right)  }\left(
s_{C},s_{R}|G\right)  $ and C's payoff is $-T^{\left(  K\right)  }\left(
s_{C},s_{R}|G\right)  $ (hence av-CR is a \emph{two-person zero-sum game}).
Note that capture time (i) obviously\ depends on $K$ and (ii) for a fixed $K$
is fully determined by the $s_{C}$ and $s_{R}$ strategies. Now, following
standard game theoretic practice, we define \emph{optimal} strategies.

\begin{definition}
\label{prp0303a}For every graph $G$ and $K\in\mathbb{N}$, the strategies
$s_{C}^{\left(  K\right)  }\in\mathbf{S}_{C}^{\left(  K\right)  }$ and
$s_{R}^{\left(  K\right)  }\in\mathbf{S}_{R}^{\left(  K\right)  }$ are a pair
of \emph{optimal strategies} if and only if
\begin{equation}
\sup_{s_{R}\in\mathbf{S}_{R}^{\left(  K\right)  }}\inf_{s_{C}\in\mathbf{S}%
_{C}^{\left(  K\right)  }}T^{\left(  K\right)  }\left(  s_{C},s_{R}|G\right)
=\inf_{s_{C}\in\mathbf{S}_{C}^{\left(  K\right)  }}\sup_{s_{R}\in
\mathbf{S}_{R}^{\left(  K\right)  }}T^{\left(  K\right)  }\left(  s_{C}%
,s_{R}|G\right)  .\label{eq0302}%
\end{equation}
The \emph{value }of the av-CR game played with $K$ cops is the common value
of the two sides of (\ref{eq0302}) and we denote it $T^{\left(  K\right)
}\left(  s_{C}^{\left(  K\right)  },s_{R}^{\left(  K\right)  }|G\right)  $.
\end{definition}

We emphasize that the validity of  (\ref{eq0302}) is not known \emph{a
priori}. C (resp. R)\ can guarantee that he loses no more than $\inf_{s_{C}%
\in\mathbf{S}_{C}^{\left(  K\right)  }}\sup_{s_{R}\in\mathbf{S}_{R}^{\left(
K\right)  }}T^{\left(  K\right)  }\left(  s_{C},s_{R}|G\right)  $ (resp. gains
no less than $\sup_{s_{R}\in\mathbf{S}_{R}^{\left(  K\right)  }}\inf_{s_{C}%
\in\mathbf{S}_{C}^{\left(  K\right)  }}T^{\left(  K\right)  }\left(
s_{C},s_{R}|G\right)  $). We always have%
\begin{equation}
\sup_{s_{R}\in\mathbf{S}_{R}^{\left(  K\right)  }}\inf_{s_{C}\in\mathbf{S}%
_{C}^{\left(  K\right)  }}T^{\left(  K\right)  }\left(  s_{C},s_{R}|G\right)
\leq\inf_{s_{C}\in\mathbf{S}_{C}^{\left(  K\right)  }}\sup_{s_{R}\in
\mathbf{S}_{R}^{\left(  K\right)  }}T^{\left(  K\right)  }\left(  s_{C}%
,s_{R}|G\right)  .\label{eq0303}%
\end{equation}
But, since av-CR is an \emph{infinite} game (i.e., depending on $s_{C}$ and $s_{R}$, it can last an
infinite number of turns) it is not clear that equality holds in
(\ref{eq0303}) and, even when it does, the existence of optimal strategies
$\left(  s_{C}^{\left(  K\right)  },s_{R}^{\left(  K\right)  }\right)  $ which
achieve the value is not guaranteed.

In fact it can be proved that, for $K\geq c\left(  G\right)  $, av-CR has both
a value and optimal strategies. The details of this proof will be reported
elsewhere, but the gist of the argument is the following. Since av-CR\ is
played with $K\geq c\left(  G\right)  $ cops, by Theorem \ref{prp0302b},
C\ has a \emph{memoryless} strategy which guarantees the game will last no
more than $\overline{T}\left(  K;G\right)  $ turns. Hence av-CR\ with $K\geq
c\left(  G\right)  $ essentially is a \emph{finite} zero-sum two-player game;
it is well known \cite{osborne1994course} that every such game has a value and
optimal memoryless strategies. In short, we have the following.

\begin{theorem}
\label{prp0308a}Given any graph $G$ and any $K\geq c\left(  G\right)  $, for
the av-CR game there exists a pair $\left(  \sigma_{C}^{\left(  K\right)
},\sigma_{R}^{\left(  K\right)  }\right)  \in\widetilde{\mathbf{S}}%
_{C}^{\left(  K\right)  }\times\widetilde{\mathbf{S}}_{R}^{\left(  K\right)
}$ of \emph{memoryless time optimal strategies} such that
\[
T^{\left(  K\right)  }\left(  \sigma_{C}^{\left(  K\right)  },\sigma
_{R}^{\left(  K\right)  }|G\right)  =\sup_{s_{R}\in\mathbf{S}_{R}^{\left(
K\right)  }}\inf_{s_{C}\in\mathbf{S}_{C}^{\left(  K\right)  }}T^{\left(
K\right)  }\left(  s_{C},s_{R}|G\right)  =\inf_{s_{C}\in\mathbf{S}%
_{C}^{\left(  K\right)  }}\sup_{s_{R}\in\mathbf{S}_{R}^{\left(  K\right)  }%
}T^{\left(  K\right)  }\left(  s_{C},s_{R}|G\right)  .
\]

\end{theorem}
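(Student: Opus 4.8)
The plan is to reduce the infinite av-CR game to a finite, perfect-information game, solve the latter by backward induction, and then promote the resulting optimal strategies to \emph{memoryless} ones by exploiting the reachability / shortest-path structure already used for Theorem \ref{prp0301a}.

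First I would pin down a finite horizon. Since $K \ge c(G)$, Theorem \ref{prp0302b} supplies a memoryless cop strategy $\sigma_C^{0}$ and a finite bound $\overline{T}(K;G)$ forcing capture within $\overline{T}(K;G)$ rounds against \emph{every} robber strategy. Consequently $\inf_{s_C}\sup_{s_R} T^{(K)}(s_C,s_R|G) \le \sup_{s_R} T^{(K)}(\sigma_C^{0},s_R|G) \le \overline{T}(K;G) < \infty$, so the upper value is finite, and by (\ref{eq0303}) the lower value is finite as well. This lets me discard all plays of length exceeding $\overline{T}(K;G)$: neither player can profit from them, since C already guarantees termination by then and the optimal value cannot exceed $\overline{T}(K;G)$.

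Next I would form the finite extensive-form game $\Gamma$ by unrolling the CR game digraph $(\overline{V}^{(K)},\overline{E}^{(K)})$ from the root $(\lambda,\lambda,C)$ to depth $2\,\overline{T}(K;G)$ (alternating C- and R-moves), declaring a branch terminal as soon as it enters $\overline{F}^{(K)}$ and assigning it the payoff $t$ at which this first occurs; branches that never reach $\overline{F}^{(K)}$ within the horizon receive a payoff strictly larger than $\overline{T}(K;G)$. Because $\Gamma$ is a finite zero-sum game of \emph{perfect information} (moves are sequential and both players see the whole history), Zermelo--Kuhn backward induction yields a value together with pure optimal strategies for C and R; by the horizon argument of the previous paragraph this value coincides with both sides of (\ref{eq0302}), establishing the existence of the value of the infinite game.

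The crux is upgrading the backward-induction strategies --- which a priori depend on the full history, equivalently on the pair (current position, elapsed time) --- to genuinely memoryless ones. Here I would appeal to the shortest-path (quantitative reachability) structure: define $V(u)$ as the game value of minimal time-to-$\overline{F}^{(K)}$ starting from node $u$, and verify the Bellman-type optimality equation $V(u)=0$ for $u \in \overline{F}^{(K)}$ and otherwise $V(u)=1+\mathrm{opt}_{v}\,V(v)$, where the optimization is a minimum over successors $v$ at C-positions and a maximum at R-positions. Since $V$ depends only on the node $u$, the positional rules ``C moves to a successor minimizing $V$'' and ``R moves to a successor maximizing $V$'' define memoryless $\sigma_C^{(K)} \in \widetilde{\mathbf{S}}_C^{(K)}$ and $\sigma_R^{(K)} \in \widetilde{\mathbf{S}}_R^{(K)}$. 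The main obstacle is proving that these \emph{positional} strategies are globally optimal on the cop's winning region $\overline{W}_0^{(K)}$ and that the capture time they induce equals $V(\lambda,\lambda,C)$: this requires showing $V$ is well-defined and finite on $\overline{W}_0^{(K)}$ (finiteness follows from the $\overline{T}(K;G)$ bound) and that the min/max selections cannot be exploited by a history-dependent deviation --- exactly the quantitative analogue of the positional-determinacy statement in Theorem \ref{prp0301a} for shortest-path games.
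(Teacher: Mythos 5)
Your proposal follows essentially the same route as the paper, which only sketches the argument (details deferred to another publication): invoke Theorem \ref{prp0302b} to bound the game length by $\overline{T}(K;G)$, conclude that av-CR with $K\geq c(G)$ is effectively a \emph{finite} zero-sum perfect-information game, and then appeal to the standard existence of a value and optimal memoryless strategies for such games. Your additional third step --- upgrading the backward-induction (history/time-dependent) strategies to genuinely positional ones via the Bellman optimality equations on the CR game digraph --- is a correct and welcome elaboration of precisely the point the paper delegates to a citation of \cite{osborne1994course}, and it matches the optimality equations (\ref{eq0502})--(\ref{eq0504}) that reappear in Section \ref{sec050101}.
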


Hence we can define the capture time of a graph to be the value of av-CR when
played on $G$ with $K=c\left(  G\right)  $ cops.

\begin{definition}
\label{prp0303}The \emph{adversarial visible capture time} of $G$ is
\[
ct\left(  G\right)  =\sup_{s_{R}\in\mathbf{S}_{R}^{\left(  K\right)  }}%
\inf_{s_{C}\in\mathbf{S}_{C}^{\left(  K\right)  }}T^{\left(  K\right)
}\left(  s_{C},s_{R}|G\right)  =\inf_{s_{C}\in\mathbf{S}_{C}^{\left(
K\right)  }}\sup_{s_{R}\in\mathbf{S}_{R}^{\left(  K\right)  }}T^{\left(
K\right)  }\left(  s_{C},s_{R}|G\right)  .
\]
with $K=c\left(  G\right)  $.
\end{definition}

\subsection{\noindent\textbf{The Node dv-CR Game\label{sec0302}}}

In this game the robber is visible and performs a random walk on $G$
(\emph{drunk} robber)\ as indicated by (\ref{eq022a}). In the absence of cops,
$Y_{t}$ is a Markov chain on $V$, with transition probability matrix $P$,
where for every $u,v\in\{1,2,...,|V|\}$ we have
\[
P_{u,v}=\Pr\left(  Y_{t+1}=u|Y_{t}=v\right)  .
\]
In the presence of one or more cops, $\left\{  Y_{t}\right\}  _{t=0}^{\infty}$
$\ $is a \emph{Markov decision process} (MDP) \cite{puterman1994markov} with
state space $V\cup\left\{  n+1\right\}  $ (where $n+1$ is the \emph{capture
state}) and transition probability matrix $P\left(  X_{t}\right)  $ (obtained
from $P$ as shown in \cite{kehagias2012some}); in other words, $X_{t}$ is the
\emph{control} variable, selected by C.

Since no robber strategy is involved, the capture time on $G$ only depends on
the ($K$-cops strategy) $s_{C}$: namely:
\begin{equation}
T^{\left(  K\right)  }\left(  s_{C}|G\right)  =\min\left\{  t:\exists
k\in\left[  K\right]  \text{ such that }Y_{t}=X_{t}^{k}\right\}
,\label{eq0332}%
\end{equation}
which can also be written as
\begin{equation}
T^{\left(  K\right)  }\left(  s_{C}|G\right)  =\sum_{t=0}^{\infty}%
\mathbf{1}\left(  Y_{t}\notin X_{t}\right)  ,\label{eq0333}%
\end{equation}
where $\mathbf{1}\left(  Y_{t}\notin X_{t}\right)  $ equals $1$ if $Y_{t}$
does not belong to $X_{t}$ (taken as a set of cop positions)\ and 0 otherwise.
Since the robber performs a random walk on $G$, it follows that $T^{\left(  K\right)  }\left(
s_{C}|G\right)  $ is a random variable, and C wants to minimize its expected
value:%
\begin{equation}
E\left(  T^{\left(  K\right)  }\left(  s_{C}|G\right)  \right)  =E\left(
\sum_{t=0}^{\infty}\mathbf{1}\left(  Y_{t}\notin X_{t}\right)  \right)
.\label{eq0331}%
\end{equation}
The minimization of (\ref{eq0331})\ is a typical \emph{undiscounted, infinite
horizon} MDP\ problem. Using standard MDP\ results \cite{puterman1994markov}
we see that (i) C loses nothing by determining $X_{0},X_{1},\ldots$ through a
memoryless strategy $\sigma_{C}\left(  x,y\right)  $ and (ii) for every
$K\geq1$, $E\left(  T^{\left(  K\right)  }\left(  \sigma_{C}|G\right)
\right)  $ is well defined. Furthermore, for every $K\in\mathbb{N}$ there
exists an optimal strategy $\sigma_{C}^{\left(  K\right)  }$ which minimizes
$E\left(  T^{\left(  K\right)  }\left(  \sigma_{C}|G\right)  \right)  $; hence
we have the following.

\begin{theorem}
\label{prp0310a}Given any graph $G$ and $K\in\mathbb{N}$, for the dv-CR\ game
played on $G$ with $K$ cops there exists a memoryless strategy $\sigma
_{C}^{\left(  K\right)  }\in\widetilde{\mathbf{S}}_{C}^{\left(  K\right)  }$
such that
\[
E\left(  T^{\left(  K\right)  }\left(  \sigma_{C}^{\left(  K\right)
}|G\right)  \right)  =\inf_{s_{C}\in\mathbf{S}_{C}^{\left(  K\right)  }%
}E\left(  T^{\left(  K\right)  }\left(  s_{C}|G\right)  \right)  .
\]

\end{theorem}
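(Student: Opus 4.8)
The plan is to recognize that the minimization of (\ref{eq0331}) is, from C's point of view, a \emph{stochastic shortest path} (first-passage) problem on a finite state space, and then to invoke the standard dynamic-programming existence theory for such problems \cite{puterman1994markov}. First I would make the underlying MDP explicit. Take as states the configurations $(x,y)\in V^{K}\times V$ together with a single cost-free absorbing state $\Delta$ into which every captured configuration (those with $y\in x$) is collapsed, and adjoin an initial pseudo-state $(\lambda,\lambda)$ from which C chooses a placement $X_{0}$ and the robber then appears uniformly at random as in (\ref{eq022a}). From a non-absorbing configuration $(x,y)$, the controller C selects an action $x'$ among the legal cop moves out of $x$ (each cop either stays or slides along an edge); the robber then moves to $y'$ according to the random walk (\ref{eq022a}), and the system passes to $\Delta$ if $y'\in x'$ and to $(x',y')$ otherwise. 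Assigning one-stage cost $1$ to every non-absorbing state and $0$ to $\Delta$, the expected total cost of a policy is exactly $E(T^{(K)}(s_{C}|G))$ in the form (\ref{eq0333})--(\ref{eq0331}). Because $G$ is finite and $K$ is fixed, both the state space and every action set are finite.

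Next I would verify the two regularity conditions that make such a problem well behaved. The one-stage costs are manifestly nonnegative. For the existence of a \emph{proper} policy — one reaching $\Delta$ with probability $1$ from every configuration — it suffices to exhibit one: let C keep a cop fixed at an arbitrary vertex $v$. Since $G$ is connected, the robber's walk is an irreducible Markov chain on the finite set $V$, hence positive recurrent, so the expected hitting time of $v$ from any starting distribution is finite; this stationary policy therefore captures the drunk robber with probability $1$ and in finite expected time, which already shows that the optimal value is finite. Conversely, any policy that fails to reach $\Delta$ with probability $1$ from some state leaves the process in non-absorbing states for infinitely many steps with positive probability, and since each such step contributes cost $1$, its expected cost is infinite; thus improper policies can never be optimal.

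With nonnegative costs, finite state and action spaces, at least one proper policy, and infinite cost for every improper policy, the minimization of (\ref{eq0331}) satisfies the hypotheses of the standard total-expected-cost (stochastic shortest path) MDP existence theorem \cite{puterman1994markov}. That theorem yields that the optimal cost-to-go is the unique solution of Bellman's equation, that the infimum of the expected total cost over \emph{all} (history-dependent) policies is attained, and that it is attained by a stationary deterministic policy, namely one selecting at each state an action achieving the minimum in Bellman's equation. Reading this stationary policy back as a map $\sigma_{C}^{(K)}(x,y)$ from the current cop--robber configuration to the next cop configuration (with $\sigma_{C}^{(K)}(\lambda,\lambda)$ giving the initial placement) produces a memoryless strategy $\sigma_{C}^{(K)}\in\widetilde{\mathbf{S}}_{C}^{(K)}$ satisfying $E(T^{(K)}(\sigma_{C}^{(K)}|G))=\inf_{s_{C}\in\mathbf{S}_{C}^{(K)}}E(T^{(K)}(s_{C}|G))$, as claimed.

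The single step that requires genuine argument rather than citation is the finiteness of the value: it rests on the recurrence of the robber's random walk on the finite connected graph $G$, which guarantees a proper policy and thereby legitimizes the application of the shortest-path existence theory. Once finiteness is secured, the remaining content — that the infimum is attained and that the optimum over history-dependent strategies coincides with the optimum over memoryless ones — is furnished directly by the cited MDP machinery, so I do not expect any further obstacle there.
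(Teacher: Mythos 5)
Your proposal is correct and takes essentially the same approach as the paper: the paper also casts dv-CR as an undiscounted, infinite-horizon total-cost MDP with the cop positions as the control variable and invokes the standard results of \cite{puterman1994markov} to obtain a memoryless (stationary) optimal strategy. Your explicit verification of the stochastic-shortest-path hypotheses (existence of a proper policy, infinite cost of improper ones) merely fills in what the paper delegates to the citation; indeed, the paper makes the same properness argument itself in Section \ref{sec050102}, using the random-walking cop strategy of Theorem \ref{prp0307}, to justify convergence of the CADR algorithm.
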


\begin{definition}
The \emph{drunk visible capture time} of $G$ is
\[
dct\left(  G\right)  =\inf_{s_{C}\in\mathbf{S}_{C}^{\left(  K\right)  }%
}E\left(  T^{\left(  K\right)  }\left(  s_{C}|G\right)  \right)  .
\]
with $K=c\left(  G\right)  $.
\end{definition}

Note that, even though a single cop suffices to capture the drunk robber on
any $G$, we have chosen to define $dct\left(  G\right)  $ to be the capture
time for $K=c\left(  G\right)  $ cops; we have done this to make (in Section
\ref{sec04}) an equitable comparison between $ct\left(  G\right)  $ and
$dct\left(  G\right)  $.

\subsection{\noindent\textbf{The Node ai-CR Game\label{sec0303}}}

This is \emph{not} a perfect information game, 
since C\ cannot see R's moves. Hence C and R\ must use \emph{mixed} strategies $s_{C}$, $s_{R}$. A mixed
strategy $s_{C}$ (resp. $s_{R}$) specifies, for every $t$, a conditional
probability $\Pr\left(  X_{t}|X_{0},Y_{0},\ldots,Y_{t-2},X_{t-1}%
,Y_{t-1}\right)  $ (resp. $\Pr\left(  Y_{t}|X_{0},Y_{0},\ldots,Y_{t-1}%
,X_{t}\right)  $) according to which C (resp. R)\ selects his $t$-th move. Let
$\overline{\mathbf{S}}_{C}^{\left(  K\right)  }$(resp. $\overline{\mathbf{S}%
}_{R}^{\left(  K\right)  }$) be the set of all mixed cop (resp.
robber)\ strategies. A strategy pair $\left(  s_{R},s_{C}\right)  \in
\overline{\mathbf{S}}_{C}^{\left(  K\right)  }\times\overline{\mathbf{S}}%
_{R}^{\left(  K\right)  }$, specifies probabilities for all events $\left(
X_{0}=x_{0},\ldots,X_{t}=x_{t},Y_{0}=y_{0},\ldots,Y_{t}=y_{t}\right)  $ and
these induce a probability measure which in turn determines R's expected gain
(and C's expected loss), namely $E\left(  T^{\left(  K\right)  }\left(
s_{C}^{\left(  K\right)  },s_{R}^{\left(  K\right)  }|G\right)  \right)  $.
Let us define%
\begin{align*}
\underline{v}^{\left(  K\right)  }  &  =\sup_{s_{R}\in\overline{\mathbf{S}%
}_{R}^{\left(  K\right)  }}\inf_{s_{C}\in\overline{\mathbf{S}}_{C}^{\left(
K\right)  }}E\left(  T^{\left(  K\right)  }\left(  s_{C},s_{R}|G\right)
\right)  ,\\
\overline{v}^{\left(  K\right)  }  &  =\inf_{s_{C}\in\overline{\mathbf{S}}%
_{C}^{\left(  K\right)  }}\sup_{s_{R}\in\overline{\mathbf{S}}_{R}^{\left(
K\right)  }}E\left(  T^{\left(  K\right)  }\left(  s_{C},s_{R}|G\right)
\right)  .
\end{align*}
Similarly to av-CR, C (resp. R)\ can guarantee an expected payoff no greater
than $\overline{v}^{\left(  K\right)  }$ (resp. no less than \underline{$v$%
}$^{\left(  K\right)  }$). If \underline{$v$}$^{\left(  K\right)  }%
=\overline{v}^{\left(  K\right)  }$, we denote the common value by $v^{\left(
K\right)  }$ and call it the \emph{value} of the ai-CR game (played on $G$,
with $K$ cops). A pair of strategies $\left(  s_{C}^{\left(  K\right)  }%
,s_{R}^{\left(  K\right)  }\right)  $ is called optimal if and only if
$E\left(  T^{\left(  K\right)  }\left(  s_{C}^{\left(  K\right)  }%
,s_{R}^{\left(  K\right)  }|G\right)  \right)  =v^{\left(  K\right)  }$.

In \cite{kehagias2013cops} we have studied the ai-CR game and proved that it
does indeed have a value and optimal strategies. We give a summary of the
relevant argument; proofs can be found in \cite{kehagias2013cops}.

First, \emph{invisibility does not increase the cop number}. In other words,
there is a cop strategy (involving $c\left(  G\right)  $ cops)\ which
guarantees bounded expected capture time for every robber strategy $s_{R}$.
More precisely, we have proved the following.

\begin{theorem}
\label{prp0307}On any graph $G$ let $\overline{s}_{C}^{\left(  K\right)  }$
denote the strategy in which $K$ cops random-walk on $G$. Then
\[
\forall K\geq c\left(  G\right)  :\sup_{s_{R}\in\overline{\mathbf{S}}%
_{R}^{\left(  K\right)  }}E\left(  T^{\left(  K\right)  }\left(  \overline
{s}_{C}^{\left(  K\right)  },s_{R}|G\right)  \right)  <\infty.
\]

\end{theorem}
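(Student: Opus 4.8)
The plan is to bootstrap from the \emph{visible} game. Since $K\geq c(G)$, Theorem~\ref{prp0302b} furnishes a \emph{memoryless} cop strategy $\sigma_C^{\ast}\in\widetilde{\mathbf{S}}_C^{(K)}$ and a finite number $M=\overline{T}(K;G)$ such that, in av-CR, $\sigma_C^{\ast}$ captures \emph{every} robber within $M$ rounds. The guiding idea is that the random-walking cops can, with a fixed positive probability, reproduce \emph{exactly} the moves $\sigma_C^{\ast}$ would make, and on that event capture is forced within $M$ rounds. The first thing I would record is a decoupling: because $\overline{s}_C^{(K)}$ ignores the robber, the cop trajectory $\{X_t\}$ is an autonomous Markov chain (independent lazy walks), whereas the robber's moves are functions of the observed cop positions and the robber's own randomization, i.e.\ measurable with respect to the history $\mathcal{F}_{t-1}$ up to time $t-1$. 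Taking the walk to be \emph{lazy} (each cop stays, or steps to a uniformly chosen neighbor) guarantees that every legal one-step cop move, \emph{including staying}, has conditional probability at least $(1/n)^{K}$ given $\mathcal{F}_{t-1}$; fixing this convention at the outset is what lets me mimic cops that remain in place.

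The key lemma I would then prove is a uniform one-block bound: for every starting position and every $s_R\in\overline{\mathbf{S}}_R^{(K)}$,
\[
\Pr\!\left[T^{(K)}\leq M\mid \overline{s}_C^{(K)},s_R\right]\ \geq\ p:=(1/n)^{K(M+1)}>0 .
\]
Writing $X_t^{\ast}=\sigma_C^{\ast}(X_{t-1},Y_{t-1})$ for the (legal, $\mathcal{F}_{t-1}$-measurable) move the visible winning strategy prescribes, set $B=\{X_0=X_0^{\ast}\}\cap\bigcap_{t=1}^{M}\{X_t=X_t^{\ast}\}$. Because $X_t^{\ast}$ is a legal move from $X_{t-1}$ and the walk is autonomous, $\Pr[X_t=X_t^{\ast}\mid\mathcal{F}_{t-1}]\geq(1/n)^{K}$, so iterating the tower property gives $\Pr[B]\geq p$. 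On $B$ the cops' realized play coincides with a play of $\sigma_C^{\ast}$ in the visible game, so the winning property of $\sigma_C^{\ast}$ forces capture by round $M$; hence $B\subseteq\{T^{(K)}\leq M\}$.

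Finally I would close with a renewal argument. Applying the key lemma afresh from the (uncaptured) configuration at time $jM$ is legitimate: the walk is Markov, and the continuation of any $s_R$ after an observed history is again an element of $\overline{\mathbf{S}}_R^{(K)}$, so the bound $p$ applies uniformly. This yields $\Pr[T^{(K)}>(j+1)M\mid\mathcal{F}_{jM},\,T^{(K)}>jM]\leq 1-p$, whence $\Pr[T^{(K)}>jM]\leq(1-p)^{j}$ and
\[
E\!\left(T^{(K)}\!\left(\overline{s}_C^{(K)},s_R\mid G\right)\right)\ \leq\ M\sum_{j\geq 0}(1-p)^{j}\ =\ \frac{M}{p}\ <\ \infty ,
\]
with the bound independent of $s_R$, which is exactly the assertion. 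For the placement round I would assume (as one may) that the initial cop placement is uniform on $V^{K}$, so that $\{X_0=X_0^{\ast}\}$ already has probability $(1/n)^{K}$; for a fixed placement one instead first routes the walk to $\sigma_C^{\ast}$'s placement in a bounded number of steps (possible since $G$ is connected, and whatever node the robber then occupies yields a position from which $\sigma_C^{\ast}$ still wins), which only enlarges $M$ and shrinks $p$ by constant factors.

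I expect the main obstacle to be the key lemma: handling the adaptivity correctly, since the target $X_t^{\ast}$ depends on the \emph{random} robber position $Y_{t-1}$. The resolution is precisely the autonomy of the cop walk together with the tower property, and the lazy-walk convention that assigns positive probability to ``stay'' moves; once these are in place, the geometric tail bound and the uniformity over $s_R$ follow routinely.
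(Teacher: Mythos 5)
The paper itself contains no proof of Theorem~\ref{prp0307}: it is quoted from \cite{kehagias2013cops}, with the text explicitly deferring all proofs to that reference. So your proposal has to be judged on its own merits. Your overall strategy---let the autonomous cop walks mimic, move by move, a deterministic winning strategy of the \emph{visible} game, use the tower property to get a uniform positive probability of mimicking over a finite block, and finish with a geometric-tail renewal argument---is the natural (and almost certainly the reference's) route, and the probabilistic core is handled correctly: since $\overline{s}_C^{(K)}$ ignores the robber, the target move $X_t^{\ast}=\sigma_C^{\ast}(X_{t-1},Y_{t-1})$ is $\mathcal{F}_{t-1}$-measurable and legal, each matching event has conditional probability at least $(1/n)^K$, and on the matching event the realized play is a play of $\sigma_C^{\ast}$ against a legal robber response, so capture occurs within $M$ rounds.

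There is, however, a genuine gap in the renewal step. Your key lemma asserts the bound $p$ ``for every starting position,'' but the proof you give only covers the start of the game: Theorem~\ref{prp0302b} guarantees that $\sigma_C^{\ast}$ wins within $M=\overline{T}(K;G)$ rounds when the cops begin at $\sigma_C^{\ast}$'s \emph{own} initial placement and the robber places afterwards. At time $jM$ the cops sit at an arbitrary $X_{jM}$ produced by the random walk, and nothing in Theorem~\ref{prp0302b} says that playing $\sigma_C^{\ast}$ from the pair $(X_{jM},Y_{jM})$ forces capture within $M$ rounds: a memoryless strategy that wins from the start may be defined arbitrarily (and uselessly) at positions it never visits under its own play. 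Your justification---``the walk is Markov and continuations of $s_R$ are again strategies''---addresses only the robber side, not the cop-prescription side. The fix is actually sitting in your last paragraph, but it must be promoted from a remark about the initial placement to the engine of \emph{every} block: in each block the mimicked prescription should be ``march all cops along shortest paths to $\sigma_C^{\ast}$'s placement (at most $D=$ diameter of $G$ steps, padding with stays), then run $\sigma_C^{\ast}$ treating the robber's current location as his placement''; this wins within $D+M$ rounds from any configuration whatsoever, and matching it costs probability at least $(1/n)^{K(D+M)}$, which is all the renewal needs. (Equivalently, first strengthen Theorem~\ref{prp0302b}: when $K\geq c(G)$ every position lies in the cops' winning region $\overline{W}_0^{(K)}$ of the reachability game, so memoryless determinacy plus the no-repeated-positions argument gives one memoryless strategy winning within a uniform bound from \emph{all} positions.) A secondary point you should state as an explicit assumption rather than adopt silently: the paper's random walk (\ref{eq022a}) is non-lazy, while your matching argument needs the lazy convention to reproduce ``stay'' moves; the theorem leaves the cops' walk unspecified, so the lazy reading is defensible, but with a strictly non-lazy walk the argument would need genuine extra work to handle parity.
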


Now consider the \textquotedblleft$m$-truncated\textquotedblright%
\ ai-CR$\ $game which is played exactly as the \textquotedblleft
regular\textquotedblright\ ai-CR$\ $but lasts at most $m$ turns. Strategies
$s_{R}\in\overline{\mathbf{S}}_{R}^{\left(  K\right)  }$ and $s_{C}%
\in\overline{\mathbf{S}}_{C}^{\left(  K\right)  }$ can be used in the
$m$-truncated game: C and R use them only until the $m$-th turn. Let R receive
one payoff unit for every turn in which the robber is not captured; denote the
payoff  of the $m$-truncated game (when strategies $s_{C}$,$\ s_{R}$ are used)
by $T_{m}^{\left(  K\right)  }\left(  s_{C},s_{R}|G\right)  $. Clearly
\[
\forall m\in\mathbb{N},s_{R}\in\overline{\mathbf{S}}_{R}^{\left(  K\right)
},s_{C}\in\overline{\mathbf{S}}_{C}^{\left(  K\right)  }:T_{m}^{\left(
K\right)  }\left(  s_{C},s_{R}|G\right)  \leq T_{m+1}^{\left(  K\right)
}\left(  s_{C},s_{R}|G\right)  \leq T^{\left(  K\right)  }\left(  s_{C}%
,s_{R}|G\right)
\]
The expected payoff of the $m$-truncated game is $E\left(  T_{m}^{\left(
K\right)  }\left(  s_{C},s_{R}|G\right)  \right)  $. Because it is a
\emph{finite}, two-person, zero-sum game, the $m$-truncated game has a value
and optimal strategies. Namely, the value is
\[
v^{\left(  K,m\right)  }=\sup_{s_{R}\in\overline{\mathbf{S}}_{R}^{\left(
K\right)  }}\inf_{s_{C}\in\overline{\mathbf{S}}_{C}^{\left(  K\right)  }%
}E\left(  T_{m}^{\left(  K\right)  }\left(  s_{C},s_{R}|G\right)  \right)
=\inf_{s_{C}\in\overline{\mathbf{S}}_{C}^{\left(  K\right)  }}\sup_{s_{R}%
\in\overline{\mathbf{S}}_{R}^{\left(  K\right)  }}E\left(  T_{m}^{\left(
K\right)  }\left(  s_{C},s_{R}|G\right)  \right)
\]
and there exist optimal strategies $s_{C}^{\left(  K,m\right)  }\in
\overline{\mathbf{S}}_{C}^{\left(  K\right)  }$, $s_{R}^{\left(  K,m\right)
}\in\overline{\mathbf{S}}_{R}^{\left(  K\right)  }$ such that
\begin{equation}
E\left(  T_{m}^{\left(  K\right)  }\left(  s_{C}^{\left(  K,m\right)  }%
,s_{R}^{\left(  K,m\right)  }|G\right)  \right)  =v^{\left(  K,m\right)
}<\infty.\label{eq3002}%
\end{equation}
In \cite{kehagias2013cops} we use the truncated games to prove that the
\textquotedblleft regular\textquotedblright\ ai-CR game has a value, an
optimal C strategy and $\varepsilon$-optimal R strategies. More precisely, we
prove the following.

\begin{theorem}
\label{prp0308}Given any graph $G$ and $K\geq c\left(  G\right)  $, the
ai-CR\ game played on $G$ with $K$ cops has a value $v^{\left(  K\right)  }$
which satisfies%
\[
\lim_{m\rightarrow\infty}v^{\left(  K,m\right)  }=\underline{v}^{\left(
K\right)  }=\overline{v}^{\left(  K\right)  }=v^{\left(  K\right)  }.
\]
Furthermore, there exists a strategy $s_{C}^{\left(  K\right)  }\in
\overline{\mathbf{S}}_{C}^{\left(  K\right)  }$ such that
\begin{equation}
\sup_{s_{R}\in\overline{\mathbf{S}}_{R}^{\left(  K\right)  }}E\left(
T^{\left(  K\right)  }\left(  s_{C}^{\left(  K\right)  },s_{R}\right)
\right)  =v^{\left(  K\right)  }; \label{eq131a}%
\end{equation}
and for every $\varepsilon>0$ there exists an $m_{\varepsilon}$ and a strategy
$s_{R}^{\left(  K,\varepsilon\right)  }$ such that
\begin{equation}
\forall m\geq m_{\varepsilon}:v^{\left(  K\right)  }-\varepsilon\leq
\sup_{s_{C}\in\overline{\mathbf{S}}_{C}^{\left(  K\right)  }}E\left(
T^{\left(  K\right)  }\left(  s_{C},s_{R}^{\left(  K,\varepsilon\right)
}\right)  |G\right)  \leq v^{\left(  K\right)  }. \label{eq131b}%
\end{equation}

\end{theorem}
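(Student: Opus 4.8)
The plan is to leverage three facts already in place: the $m$-truncated games possess values $v^{\left(K,m\right)}$ together with optimal strategies $s_{C}^{\left(K,m\right)}\in\overline{\mathbf{S}}_{C}^{\left(K\right)}$, $s_{R}^{\left(K,m\right)}\in\overline{\mathbf{S}}_{R}^{\left(K\right)}$; the payoffs are monotone, $T_{m}^{\left(K\right)}\leq T_{m+1}^{\left(K\right)}\leq T^{\left(K\right)}$; and, by Theorem \ref{prp0307}, the random-walk cop strategy $\overline{s}_{C}^{\left(K\right)}$ certifies $M:=\sup_{s_{R}\in\overline{\mathbf{S}}_{R}^{\left(K\right)}}E\left(T^{\left(K\right)}\left(\overline{s}_{C}^{\left(K\right)},s_{R}|G\right)\right)<\infty$, whence $\overline{v}^{\left(K\right)}\leq M<\infty$. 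This finiteness is what makes the infinite-horizon limits behave, and everything downstream depends on it.

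First I would dispatch the easy inequalities. Monotonicity in $m$ makes $\left(v^{\left(K,m\right)}\right)_{m}$ non-decreasing, while $T_{m}^{\left(K\right)}\leq T^{\left(K\right)}$ gives $v^{\left(K,m\right)}\leq\overline{v}^{\left(K\right)}\leq M$; thus $v^{\ast}:=\lim_{m}v^{\left(K,m\right)}$ exists and is finite, and in particular $v^{\left(K,m\right)}\leq v^{\ast}$ for every $m$. Applying the same comparison $T_{m}^{\left(K\right)}\leq T^{\left(K\right)}$ inside $\sup_{s_{R}}\inf_{s_{C}}$ yields $v^{\left(K,m\right)}\leq\underline{v}^{\left(K\right)}$, so $v^{\ast}\leq\underline{v}^{\left(K\right)}$. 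Together with the elementary $\underline{v}^{\left(K\right)}\leq\overline{v}^{\left(K\right)}$, the chain of equalities in the theorem reduces to the single reverse bound $\overline{v}^{\left(K\right)}\leq v^{\ast}$, and producing a cop strategy that attains it will simultaneously yield the optimal cop strategy of (\ref{eq131a}).

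The heart of the argument, and the step I expect to be the main obstacle, is manufacturing a single infinite-horizon cop strategy out of the truncated optimal ones $s_{C}^{\left(K,m\right)}$. I would topologize the space of behavioral cop strategies as the product, over the countably many finite histories, of the compact simplices of distributions over legal moves; by Tychonoff's theorem this space is compact and (the history set being countable) metrizable, so some subsequence of $\left(s_{C}^{\left(K,m\right)}\right)_{m}$ converges pointwise on histories to a legal strategy $s_{C}^{\left(K\right)}$. The obstruction to concluding directly is that the truncated strategies are uncontrolled past their horizon: $\sup_{s_{R}}E\left(T^{\left(K\right)}\left(s_{C}^{\left(K,m\right)},s_{R}\right)\right)$ need not even be finite, so one cannot take limits in the infinite game. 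I circumvent this by working at each fixed finite horizon $\ell$, where $E\left(T_{\ell}^{\left(K\right)}\left(\cdot,s_{R}\right)\right)$ depends on the cop strategy through only the finitely many histories of length at most $\ell$ and is therefore continuous for the product topology. For any $s_{R}$ and any $m\geq\ell$ along the subsequence, optimality of $s_{C}^{\left(K,m\right)}$ in the $m$-truncated game gives $E\left(T_{\ell}^{\left(K\right)}\left(s_{C}^{\left(K,m\right)},s_{R}\right)\right)\leq E\left(T_{m}^{\left(K\right)}\left(s_{C}^{\left(K,m\right)},s_{R}\right)\right)\leq v^{\left(K,m\right)}\leq v^{\ast}$; continuity passes this to the limit, $E\left(T_{\ell}^{\left(K\right)}\left(s_{C}^{\left(K\right)},s_{R}\right)\right)\leq v^{\ast}$, for every $\ell$ and every $s_{R}$. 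Crucially the limit strategy is thereby good at all horizons at once, so monotone convergence in $\ell$ delivers $E\left(T^{\left(K\right)}\left(s_{C}^{\left(K\right)},s_{R}\right)\right)\leq v^{\ast}$ for every $s_{R}$. Hence $\overline{v}^{\left(K\right)}\leq\sup_{s_{R}}E\left(T^{\left(K\right)}\left(s_{C}^{\left(K\right)},s_{R}\right)\right)\leq v^{\ast}$, which closes the chain $\overline{v}^{\left(K\right)}\leq v^{\ast}\leq\underline{v}^{\left(K\right)}\leq\overline{v}^{\left(K\right)}$, proves the common value $v^{\left(K\right)}=v^{\ast}=\lim_{m}v^{\left(K,m\right)}$, and exhibits $s_{C}^{\left(K\right)}$ as the optimal cop strategy of (\ref{eq131a}).

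For the $\varepsilon$-optimal robber strategy I would truncate. Given $\varepsilon>0$, pick $m_{\varepsilon}$ with $v^{\left(K,m_{\varepsilon}\right)}\geq v^{\left(K\right)}-\varepsilon$, which is possible since $v^{\left(K,m\right)}\uparrow v^{\left(K\right)}$, and let $s_{R}^{\left(K,\varepsilon\right)}$ be the truncated optimal robber strategy $s_{R}^{\left(K,m_{\varepsilon}\right)}$, prolonged arbitrarily (say by a random walk) after turn $m_{\varepsilon}$, which does not alter $T_{m_{\varepsilon}}^{\left(K\right)}$. Then $T^{\left(K\right)}\geq T_{m_{\varepsilon}}^{\left(K\right)}$ and the optimality of $s_{R}^{\left(K,m_{\varepsilon}\right)}$ in the truncated game give, for every $s_{C}$, $E\left(T^{\left(K\right)}\left(s_{C},s_{R}^{\left(K,\varepsilon\right)}\right)\right)\geq E\left(T_{m_{\varepsilon}}^{\left(K\right)}\left(s_{C},s_{R}^{\left(K,\varepsilon\right)}\right)\right)\geq v^{\left(K,m_{\varepsilon}\right)}\geq v^{\left(K\right)}-\varepsilon$; since the robber can never guarantee more than the value, $\inf_{s_{C}}E\left(T^{\left(K\right)}\left(s_{C},s_{R}^{\left(K,\varepsilon\right)}\right)\right)\leq\underline{v}^{\left(K\right)}=v^{\left(K\right)}$, which furnishes the two-sided estimate $v^{\left(K\right)}-\varepsilon\leq\inf_{s_{C}}E\left(T^{\left(K\right)}\left(s_{C},s_{R}^{\left(K,\varepsilon\right)}\right)\right)\leq v^{\left(K\right)}$ sought in (\ref{eq131b}). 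The only genuine difficulty throughout is the compactness-plus-finite-horizon-continuity machinery of the cop step; the monotonicity bookkeeping and the robber truncation are routine once Theorem \ref{prp0307} has secured $\overline{v}^{\left(K\right)}<\infty$.
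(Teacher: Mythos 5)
Your proposal is correct and is essentially the approach this paper sketches and attributes to \cite{kehagias2013cops}: bound $\overline{v}^{(K)}$ using the random-walk cop strategy of Theorem \ref{prp0307}, use the monotone, bounded truncated values $v^{(K,m)}$ of the finite games, extract a limiting infinite-horizon cop strategy from the truncated optimal ones (your compactness-plus-finite-horizon-continuity step), and obtain $\varepsilon$-optimal robber strategies by truncating and extending. One remark: your final two-sided bound is (correctly) stated with $\inf_{s_C}$, whereas (\ref{eq131b}) as printed has $\sup_{s_C}$; the printed form is evidently a misprint, since against any fixed robber strategy a deliberately bad cop strategy drives that supremum to infinity, so what you prove is the intended statement.
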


Having established the existence of $v^{\left(  K\right)  }$ we have the following.

\begin{definition}
The \emph{adversarial invisible capture time} of $G$ is
\[
ct_{i}\left(  G\right)  =v^{\left(  K\right)  }=\sup_{s_{R}\in\overline
{\mathbf{S}}_{R}^{\left(  K\right)  }}\inf_{s_{C}\in\overline{\mathbf{S}}%
_{C}^{\left(  K\right)  }}E\left(  T^{\left(  K\right)  }\left(  s_{C}%
,s_{R}|G\right)  \right)  =\inf_{s_{C}\in\overline{\mathbf{S}}_{C}^{\left(
K\right)  }}\sup_{s_{R}\in\overline{\mathbf{S}}_{R}^{\left(  K\right)  }%
}E\left(  T^{\left(  K\right)  }\left(  s_{C},s_{R}|G\right)  \right)
\]
with $K=c\left(  G\right)  $.
\end{definition}

\subsection{\noindent\textbf{The Node di-CR Game\label{sec0304}}}

In this game $Y_{t}$ is unobservable and drunk; call this the
\textquotedblleft regular\textquotedblright\ di-CR game and also introduce the
$m$-truncated di-CR game. Both are one-player \ games or, equivalently,
$Y_{t}$ is a \emph{partially observable MDP\ }(POMDP)
\cite{puterman1994markov}. The target function is
\begin{equation}
E\left(  T^{\left(  K\right)  }\left(  s_{C}|G\right)  \right)  =E\left(
\sum_{t=0}^{\infty}\mathbf{1}\left(  Y_{t}\notin X_{t}\right)  \right)
,\label{eq0311a}%
\end{equation}
which is exactly the same as (\ref{eq0331}) but now $Y_{t}$ is
\emph{unobservable}. (\ref{eq0311a}) can be approximated by
\begin{equation}
E\left(  T_{m}^{\left(  K\right)  }\left(  s_{C}|G\right)  \right)  =E\left(
\sum_{t=0}^{m}\mathbf{1}\left(  Y_{t}\notin X_{t}\right)  \right)
.\label{eq0312}%
\end{equation}
The expected values in (\ref{eq0311a})-(\ref{eq0312})\ are well defined for
every $s_{C}$. C must select a strategy $s_{C}\in\overline{\mathbf{S}}%
_{C}^{\left(  K\right)  }$ which minimizes $E\left(  T^{\left(  K\right)
}\left(  s_{C}|G\right)  \right)  $. This is a typical \emph{infinite horizon,
undiscounted} POMDP\ problem \cite{puterman1994markov} for which the following holds.

\begin{theorem}
\label{prp0310b}Given any graph $G$ and $K\in\mathbb{N}$, for the di-CR\ game
played on $G$ with $K$ cops there exists a strategy $s_{C}^{\left(  K\right)
}\in\overline{\mathbf{S}}_{C}^{\left(  K\right)  }$ such that
\[
E\left(  T^{\left(  K\right)  }\left(  s_{C}^{\left(  K\right)  }|G\right)
\right)  =\inf_{s_{C}\in\overline{\mathbf{S}}_{C}^{\left(  K\right)  }%
}E\left(  T^{\left(  K\right)  }\left(  s_{C}|G\right)  \right)  .
\]

\end{theorem}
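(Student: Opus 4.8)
The plan is to treat the di-CR game as a one-player optimization problem (there is no adversary, only Nature driving the robber via~(\ref{eq022a})) and to reduce the POMDP~(\ref{eq0311a}) to a completely observable Markov decision process on the space of \emph{belief states}. Once this reduction is in place, existence of an optimal stationary policy follows from the theory of undiscounted, nonnegative-cost (``negative'') dynamic programming, the key inputs being that the action sets are finite and that the problem has finite optimal value.

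First I would introduce the information state. Although $Y_t$ is unobservable, C does observe his own positions $X_0,X_1,\ldots$ together with the binary ``capture / no-capture'' signal at every turn. Conditioning the robber's random walk on this observable history yields a \emph{belief} $b_t$ in the probability simplex $\Delta(V)$ over $V$, namely the conditional law of $Y_t$ given the history. The belief obeys a deterministic update: starting from the uniform $b_0$, one propagates through the transition matrix $P$, zeroes out the coordinates of the current cop configuration $X_t$ (the event $Y_t\notin X_t$), and renormalizes. A routine filtering argument shows that $(X_t,b_t)$ is a sufficient statistic for C, so the POMDP becomes a completely observable MDP with state $(X_t,b_t)\in V^K\times\Delta(V)$, finite action set (the legal next cop configurations), and per-step cost $\mathbf{1}(Y_t\notin X_t)$, absorbed at the capture state; this is the standard belief-state reduction (see \cite{puterman1994markov}).

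Next I would establish that the problem has \emph{stochastic shortest path} structure, i.e.\ finite optimal value. It suffices to exhibit one cop strategy with finite expected capture time: since $G$ is finite and connected, a single cop performing a \emph{lazy} random walk produces, together with the robber's walk, an irreducible aperiodic chain on $V\times V$, which is therefore positive recurrent and hits the diagonal in finite expected time; hence $\inf_{s_C}E(T^{(K)}(s_C|G))<\infty$. Because the one-step cost is nonnegative and bounded in $\{0,1\}$, the $m$-truncated values $v_m$ of~(\ref{eq0312}) are finite, are computed by ordinary finite-horizon dynamic programming, and increase in $m$; in the negative-cost setting the finite-horizon values converge up to the infinite-horizon value, $v=\sup_m v_m=\lim_m v_m$, and $v$ satisfies Bellman's optimality equation $v=\mathcal{T}v$ for the dynamic programming operator $\mathcal{T}$.

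The last step is to extract an actual optimal strategy, and this is where the one genuinely technical point lies: I must ensure the minimum in $v=\mathcal{T}v$ is \emph{attained} by a stationary policy. Here the favourable structure helps. Each truncated value $v_m$ is piecewise-linear and concave in $b$ (the per-step cost $\sum_{u\notin X_t}b(u)$ is linear in the belief and the POMDP recursion preserves piecewise-linear concavity), hence continuous; consequently $v=\sup_m v_m$, being an increasing supremum of continuous functions, is lower semicontinuous. Since the action set at each state is finite, the map $a\mapsto(\mathcal{T}v)(\cdot,a)$ attains its minimum, and a measurable selection yields a \emph{stationary} policy $\sigma_C$ achieving the minimum at every state; by the verification theorem for negative dynamic programming (Strauch; see also the Borel-model treatment of Bertsekas--Shreve underlying \cite{puterman1994markov}), a stationary policy attaining the Bellman minimum at the optimal value function is optimal. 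The induced strategy $s_C^{(K)}$, which chooses the next cop move as a function of $(X_t,b_t)$ alone, lies in $\overline{\mathbf{S}}_C^{(K)}$ and satisfies $E(T^{(K)}(s_C^{(K)}|G))=\inf_{s_C\in\overline{\mathbf{S}}_C^{(K)}}E(T^{(K)}(s_C|G))$. The subtle part to watch is the belief-update map, which is continuous except where the normalizer $\Pr(Y_t\notin X_t\mid\text{history})$ degenerates to $0$; but that occurs only when capture is certain, i.e.\ on the absorbing part of the state space, so it does not disturb the lower semicontinuity argument on the non-absorbed states.
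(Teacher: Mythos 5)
Your proposal is correct and takes essentially the same route as the paper, whose entire ``proof'' is the remark that di-CR is a standard infinite-horizon, undiscounted POMDP together with a citation of \cite{puterman1994markov}; your belief-state reduction, finite-value argument, and negative-dynamic-programming (conserving stationary policy) machinery is precisely the standard content that citation packages. One small slip worth noting: on a bipartite graph the joint cop--robber chain has period $2$ (the drunk robber's return times to a vertex are all even), so it need not be aperiodic; but irreducibility of a finite chain already yields a finite expected hitting time of the diagonal, which is all your argument actually uses.
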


Hence we can introduce the following.

\begin{definition}
The \emph{drunk invisible capture time} of $G$ is
\[
dct_{i}\left(  G\right)  =\inf_{s_{C}\in\overline{\mathbf{S}}_{C}^{\left(
K\right)  }}E\left(  T^{\left(  K\right)  }\left(  s_{C}|G\right)  \right)  .
\]
with $K=c\left(  G\right)  $.
\end{definition}

\subsection{The Edge CR Games\label{sec0305}}

As already mentioned, every edge CR\ variant can be reduced to the
corresponding node variant played on $L\left(  G\right)  $, the line graph of
$G$. Hence all the results and definitions of Sections \ref{sec0301} -
\ref{sec0304} hold for the edge variants as well. In particular, we have an
edge cop number $\overline{c}\left(  G\right)  =c\left(  L\left(  G\right)
\right)  $ and capture times
\[
\overline{ct}\left(  G\right)  =ct\left(  L\left(  G\right)  \right)
,\quad\overline{dct}\left(  G\right)  =dct\left(  L\left(  G\right)  \right)
,\quad\overline{ct}_{i}\left(  G\right)  =ct_{i}\left(  L\left(  G\right)
\right)  ,\quad\overline{dct}_{i}\left(  G\right)  =dct_{i}\left(  L\left(
G\right)  \right)  .
\]
In general, all of these \textquotedblleft edge
CR\ parameters\textquotedblright\ will differ from the corresponding
\textquotedblleft node CR parameters\textquotedblright.

\section{The Cost of Visibility\label{sec04}}

\subsection{Cost of Visibility in the Node CR\ Games}

As already remarked, we expect that ai-CR is more difficult (from C's point of
view) than av-CR (the same holds for the drunk counterparts of this game). We
quantify this statement by introducing the \emph{cost of visibility}
(\emph{COV}).

\begin{definition}
\label{prp0401}For every $G$, the \emph{adversarial cost of visibility} \ is
$H_{a}(G)=\frac{ct_{i}(G)}{ct(G)}$ and the \emph{drunk cost of visibility} is
$H_{d}(G)=\frac{dct_{i}(G)}{dct(G)}$.
\end{definition}

Clearly, for every $G$ we have $H_{a}\left(  G\right)  \geq1$ and
$H_{d}\left(  G\right)  \geq1$ (i.e., it is at least as hard to capture an
invisible robber than a visible one). The following theorem shows that in fact
both $H_{a}\left(  G\right)  $ and $H_{d}\left(  G\right)  $ can become
arbitrarily large. In proving the corresponding theorems, we will need the
family of \emph{long star graphs} $S_{N,M}$. For specific values of $M$ and
$N$, $S_{N,M}$ consists of $N$ paths (we call these \emph{rays}) each having
$M$ nodes, joined at a central node, as shown in Fig.\ref{fig001b}.
\begin{figure}[H]
\centering\scalebox{0.3}{\includegraphics{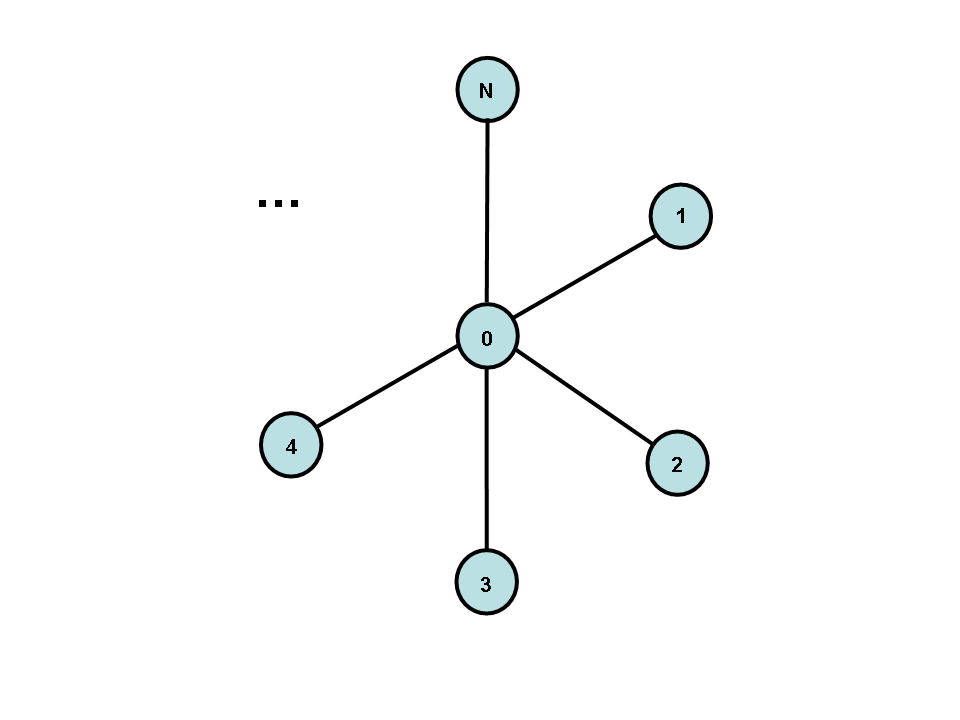}}\scalebox{0.3}{\includegraphics{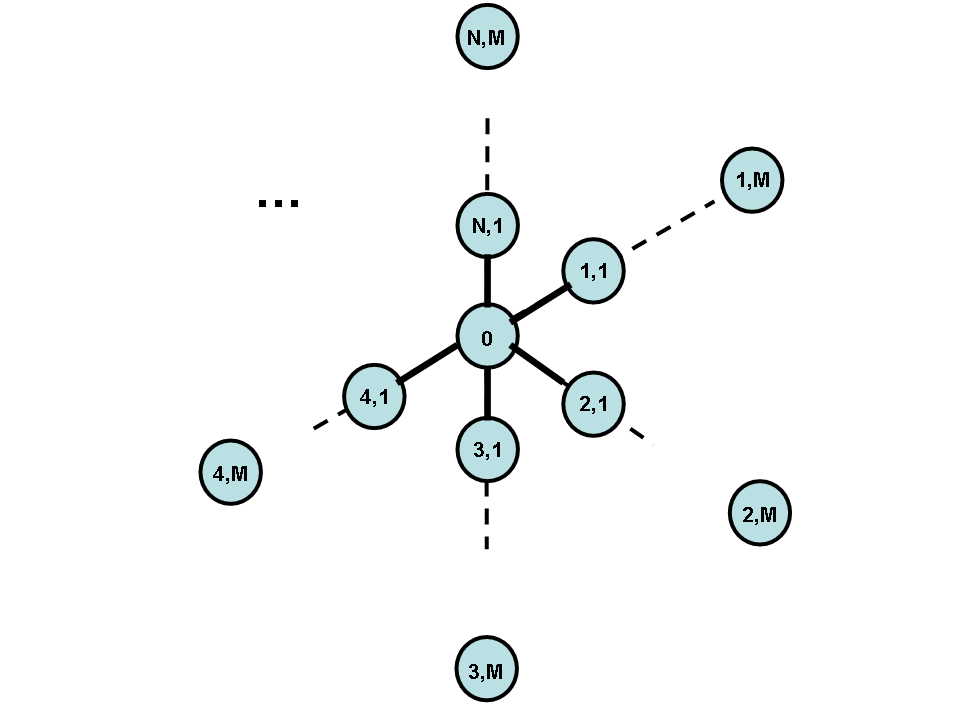}}\caption[The
broom graph $B_{n,c}$.]{(a) The \emph{star} graph $S_{N,1}$. (b) The
\emph{long star} graph $S_{N,M}$. }%
\label{fig001b}%
\end{figure}

\begin{theorem}
\label{prp0402}For every $N\in\mathbb{N}$ we have $H_{a}\left(  S_{N,1}%
\right)  =N$.
\end{theorem}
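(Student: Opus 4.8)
The plan is to pin down the three ingredients $c(S_{N,1})$, $ct(S_{N,1})$ and $ct_i(S_{N,1})$ separately and then form the ratio $H_a(S_{N,1})=ct_i(S_{N,1})/ct(S_{N,1})$. Since $S_{N,1}$ is a star (hence a tree), its cop number is $c(S_{N,1})=1$, so both capture times are computed with a single cop. For the visible game I would argue $ct(S_{N,1})=1$ directly: C places the cop on the center $c$; R, seeing this, must occupy some leaf $\ell_i$ (occupying $c$ means immediate capture). On the first round C slides the cop onto $\ell_i$, which by the target-set description in Section~\ref{sec0301} is already a captured position (the cop moves onto the robber before R replies), so capture occurs at $t=1$ and R can do no better. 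Hence $ct(S_{N,1})=1$ and everything reduces to showing $ct_i(S_{N,1})=N$.

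For the upper bound $ct_i(S_{N,1})\le N$ I would exhibit a cop strategy and bound R's best reply. The cop draws a uniformly random permutation $\pi$ of the $N$ leaves and sweeps: starting from $c$ it visits $\ell_{\pi(1)},c,\ell_{\pi(2)},c,\ldots$, sitting on $c$ at every even time and probing a leaf at every odd time. The key observation is that R is effectively \emph{pinned} to a single leaf: whenever the cop is on a leaf (odd time) R could flee to $c$, but the cop returns to $c$ on the very next (even) step and captures any robber sitting there; and whenever the cop is on $c$ the robber cannot enter $c$ at all. Thus R gains nothing by moving, and a stationary robber on $\ell_i$ is caught exactly when $\ell_i$ is probed. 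Since $\ell_i$ occupies a uniformly random position $k$ in $\pi$ and the $k$-th probe happens at time $2k-1$, the expected capture time is $\frac{1}{N}\sum_{k=1}^{N}(2k-1)=N$, independent of $i$; hence $ct_i(S_{N,1})\le N$.

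For the matching lower bound $ct_i(S_{N,1})\ge N$ I would fix the robber strategy ``place uniformly among the leaves not occupied by the cop, then never move'' and bound the capture time against an arbitrary cop. Because any cop path through the star must pass through $c$ between leaves, the $k$-th \emph{distinct} leaf it visits cannot be reached before time $2k-1$ (when the cop starts at $c$); writing $\tau_{(1)}\le\cdots\le\tau_{(N)}$ for the sorted first-visit times of the leaves (with $\tau=\infty$ for unvisited leaves) gives $\tau_{(k)}\ge 2k-1$. Since the stationary robber is caught precisely at the first visit to its leaf and its leaf is uniform, the expected capture time is $\frac{1}{N}\sum_{\ell}\tau_{\ell}=\frac{1}{N}\sum_{k}\tau_{(k)}\ge\frac{1}{N}\sum_{k=1}^{N}(2k-1)=N$. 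Combining the two bounds with the existence of the value (Theorem~\ref{prp0308}) yields $ct_i(S_{N,1})=N$, and therefore $H_a(S_{N,1})=N$. The main obstacle is the upper-bound pinning argument: one must verify carefully, using the move order (C moves first) and the capture convention, that every attempt by R to relocate through the center is punished on the cop's next step, so that no robber reply beats the stationary one. A secondary nuisance is the cop's opening move --- if C starts on a leaf rather than on $c$, the $k$-th new leaf is reached only at time $\ge 2k$ but R is now uniform over $N-1$ leaves, and the same computation again delivers the bound $N$.
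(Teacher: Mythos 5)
Your proposal is correct and takes essentially the same approach as the paper: the cop starting at the center gives $ct\left(S_{N,1}\right)=1$, the alternating center/leaf sweep in uniformly random order (with the observation that any robber attempt to relocate through the center is punished on the cop's next move) gives $ct_{i}\left(S_{N,1}\right)\leq N$, and the stationary robber placed uniformly on the leaves avoiding the cop's start gives $ct_{i}\left(S_{N,1}\right)\geq N$. Your sorted first-visit-time bound $\tau_{(k)}\geq 2k-1$ is just a slightly more explicit justification of the step the paper states as ``the best C strategy is to check all leaves without repetition,'' including the same case split on whether the cop starts at the center or at a leaf.
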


\begin{proof}
\noindent\underline{\textbf{(i) Computing }$ct\left(  S_{N,1}\right)  $}. In
av-CR, for every $N\in\mathbb{N}$ we have $ct\left(  S_{N,1}\right)  =1$: the
cop starts at $X_{0}=0$, the robber starts at some $Y_{0}=u\neq0$ and, at
$t=1$, he is captured by the cop moving into $u$; i.e., $ct\left(
S_{N,1}\right)  \leq1$; on the other hand, since there are at least two
vertices ($N\geq1$), clearly $ct\left(  S_{N,1}\right)  \geq1$.

\noindent\underline{\textbf{(ii) Computing }$ct_{i}\left(  S_{N,1}\right)  $}.
Let us now show that in ai-CR we have $ct_{i}\left(  S_{N,1}\right)  =N$. C
places the cop at $X_{0}=0$ and R places the robber at some $Y_{0}=u\neq0$. We
will obtain $ct_{i}\left(  S_{N,1}\right)  $ by bounding it from above and
below. For an upper bound, consider the following C strategy. Since C does not
know the robber's location, he must check the leaf nodes one by one. So at
every odd $t$ he moves the cop into some $u\in\left\{  1,2,\ldots,N\right\}  $
and at every even $t$ he returns to $0$. Note that R cannot change the
robber's original position; in order to do this, the robber must pass through
0 but then he will be captured by the cop (who either is already in 0 or will
be moved into it just after the robber's move). Hence C can choose the nodes
he will check on odd turns with uniform probability and \emph{without
repetitions}. Equivalently, we can assume that the order in which nodes are
chosen by C is selected uniformly at random from the set of all permutations;
further, we assume that R (who does not know this order) starts at some
$Y_{0}=u\in\left\{  1,\ldots,N\right\}  $. Then we have
\[
ct_{i}\left(  S_{N,1}\right)  \leq\frac{1}{N}\cdot1+\frac{1}{N}\cdot
3+\ldots+\frac{1}{N}\cdot\left(  2N-1\right)  =N.
\]
For a lower bound, consider the following R strategy. The robber is initially
placed at a random leaf that is different than the one selected by C (if the
cop did not start at the center). Knowing this, the best C strategy is to
check (in any order) all leaves without repetition. If the cop starts at the
center, we get exactly the same sum as for the upper bound. Otherwise, we
have
\[
ct_{i}\left(  S_{N,1}\right)  \geq\frac{1}{N-1}\cdot2+\frac{1}{N-1}%
\cdot4+\ldots+\frac{1}{N-1}\cdot\left(  2N-2\right)  =N.
\]

\noindent\underline{\textbf{(iii) Computing }$H_{a}\left(  S_{N,1}\right)  $}.
Hence, for all $N\in\mathbb{N}$ we have $H_{a}\left(  S_{N,1}\right)
=\frac{ct_{i}\left(  S_{N,1}\right)  }{ct\left(  S_{N,1}\right)  }=N$.
\end{proof}

\begin{theorem}
\label{prp0402b}For every $N\in\mathbb{N}-\{1\}$ we have
\[
H_{d}\left(  S_{N,M}\right)  =(1+o(1))\frac{(2N-1)(N-1)+1}{N}\geq2N-3,
\]
where the asymptotics is with respect to $M$; $N$ is considered a fixed constant.
\end{theorem}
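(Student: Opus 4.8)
The plan is to compute the visible and invisible capture times separately and divide. Since $S_{N,M}$ is a tree we have $c(S_{N,M})=1$, so both games are played with a single cop. The technical fact I would use throughout is that the robber is \emph{slow}: over a horizon of length $\Theta(M)$ a simple random walk on a ray is displaced by only $O(\sqrt{M})=o(M)$. As both capture times will turn out to be $\Theta(M)$, a typical robber---one starting at depth $\Theta(M)$ on its ray---neither reaches the centre nor a leaf before the game ends, and in particular cannot migrate to another ray; the exceptional $o(1)$ fraction of robbers starting at depth $o(M)$ can be discarded into the $(1+o(1))$ error term. This lets me replace the moving robber by a \emph{frozen} robber sitting at a uniformly random node whenever convenient.

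First I would compute $dct(S_{N,M})$. For the upper bound, let the cop start at the centre and at each step move one edge along the unique path toward the visible robber. Writing $D_t$ for the cop--robber distance, the cop's move lowers $D$ by one while the robber's random step has zero distance-drift at an interior node; so long as the robber stays off the centre the gap shrinks by one per round in expectation, and because the robber starts at depth $\approx M/2$ and is slow it does not reach the centre before capture (up to an $o(M)$ correction). Hence $E[T]\le(1+o(1))E[D_0]$, and since the robber is placed uniformly $E[D_0]$ is the average node depth, $(1+o(1))M/2$. For the lower bound, observe that for \emph{any} cop strategy $D_t+t$ is a submartingale off the leaves (the cop moves distance at most one and a random walk has nonnegative distance-drift from a fixed node at interior and central nodes); the robber meets a leaf only after $\gg M$ steps, so optional stopping gives $E[T]\ge(1-o(1))E[D_0]$, and the centre minimises $E[D_0]$ over starting nodes. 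Thus $dct(S_{N,M})=(1+o(1))M/2$.

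Next I would compute $dct_i(S_{N,M})$. For the upper bound I propose the \emph{sweep} strategy: start the cop at a leaf, walk it to the centre (clearing that first ray on the way in), then traverse each remaining ray out to its leaf and back, omitting the final return. Against a frozen robber this is an explicit schedule: a robber on the first ray at depth $r$ is met at time $M-r$, and a robber on the $i$-th ray ($i\ge2$) at depth $r$ is met at time $(2i-3)M+r$. Averaging over the uniform initial node and using $\sum_{i=2}^N(2i-3)=(N-1)^2$ gives expected value $(1+o(1))\,M\big[(N-1)^2/N+1/2\big]=(1+o(1))\,M\,\frac{2N^2-3N+2}{2N}$. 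I would then show the moving robber changes this by only $o(M)$: because the cop moves first, its outward sweep overtakes and catches the robber within $O(\sqrt{M})=o(M)$ of the frozen meeting time, and the $o(1)$ fraction of near-centre robbers that could cross rays contributes only lower-order terms.

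The lower bound on $dct_i$ is the crux and the step I expect to be hardest. Here I would again pass to the frozen robber (the displacement and the rare cross-ray escapes costing only $o(M)$) and then argue that no search schedule on the spider $S_{N,M}$ beats the leaf-start sweep to leading order. Concretely, the expected capture time of a frozen uniform robber is $\frac{1}{NM}\sum_v\tau(v)+o(M)$, where $\tau(v)$ is the first time the cop visits $v$; this is a minimum-latency (traveling-repairman) objective on a star with $N$ equal legs. Since the legs meet only at the centre and are interchangeable by symmetry, an optimal tour must finish one leg before starting another, and the only freedoms---the visiting order and the option to start at a leaf and drop one return trip---are exactly those realised by the sweep; formalising this via an exchange/interval-scheduling argument gives $\sum_v\tau(v)\ge(1+o(1))\,M^2\,\frac{2N^2-3N+2}{2}$, hence $dct_i(S_{N,M})=(1+o(1))\,M\,\frac{2N^2-3N+2}{2N}$. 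Dividing the two capture times yields $H_d(S_{N,M})=(1+o(1))\frac{2N^2-3N+2}{N}=(1+o(1))\frac{(2N-1)(N-1)+1}{N}$, and since $\frac{2N^2-3N+2}{N}=2N-3+\frac{2}{N}$ this is at least $2N-3$ for $M$ large, as claimed.
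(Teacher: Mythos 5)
Your proposal is correct, and its skeleton coincides with the paper's proof: both arguments freeze the robber via Chernoff bounds (displacement $o(M)$ over the $\Theta(M)$ horizon, with the $O(M^{-1/3})$-probability near-centre starts absorbed into the error term), both obtain $dct(S_{N,M})=(1+o(1))M/2$ with the cop starting at the centre, both obtain $dct_{i}(S_{N,M})=(1+o(1))\frac{M}{2}\cdot\frac{(2N-1)(N-1)+1}{N}$ via the leaf-start sweep (note $(2N-1)(N-1)+1=2N^{2}-3N+2$, so your constants agree exactly), and the concluding division is identical. Where you genuinely differ is in how the two lower bounds are justified. For the visible game, the paper computes the frozen-robber capture time for a cop starting at distance $cM$ from the centre and minimises over $c\in[0,1]$; your submartingale/optional-stopping argument ($D_{t}+t$ a submartingale away from leaves) is a clean alternative that transparently covers all cop strategies, at the cost of the usual truncation technicalities and the need to discard the $o(1)$ fraction of starts near a leaf. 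For the invisible game, the paper's lower bound is frankly informal: it enumerates three families of ``reasonable'' strategies (sweeps parametrised by starting point and initial direction, cases (ii.1)--(ii.3)) and optimises within them, implicitly assuming that sweep-like schedules are optimal. You correctly identify this as the crux and propose to close it by reducing to a minimum-latency (travelling-repairman) problem on the uniform spider and proving, by an exchange argument, that full-leg sweeps starting at a leaf are optimal to leading order; that is precisely the lemma the paper leaves unstated. However, you assert the exchange argument rather than carry it out, and your claim that the only freedoms are the leg order and dropping one return trip quietly excludes mid-leg starting points, which must also be ruled out (the paper does this, but again only within its sweep family). So your write-up is at least as rigorous as the published proof, and structurally more honest about where the difficulty lies; to be fully complete, the interval-scheduling/exchange lemma would have to be proved.
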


\begin{proof}
\underline{\textbf{(i) Computing }$dct\left(  S_{N,M}\right)  $}. We will
first show that, for any $N\in\mathbb{N}$, we have $dct\left(  S_{N,M}\right)
=\left(  1+o\left(  1\right)  \right)  \frac{M}{2}$ (recall that the parameter
$N$ is a fixed constant whereas $M\rightarrow\infty$.) Suppose that the cop
starts on the $i$-th ray, at distance $(1+o(1))cM$ from the center (for some
constant $c\in\lbrack0,1]$). The robber starts at a random vertex. It follows
that for any $j$ such that $1\leq j\leq N$, the robber starts on the $j$-th
ray with probability $(1+o(1))/N$. It is a straightforward application of
Chernoff bounds\footnote{If $X$ has a binomial distribution $Bin(n,p)$, then
$Pr(|X-np|\geq\epsilon np)\leq2\exp(-\epsilon^{2}np/3)$ for any $\epsilon
\leq3/2$. Suppose the robber starts at distance $\omega(M^{2/3})$ from the
center. During $N=O(M)$ steps the robber makes in expectation $N/2$ steps
towards the center, and $N/2$ steps towards the end of the ray. The
probability to make during $N$ steps more than $N/2+M^{2/3}$ steps towards the
center, say, is thus at most $e^{-cM^{1/3}}$, and the same holds also by
taking a union bound over all $O(M)$ steps. Hence, with probability at least
$1-e^{-cM^{1/3}}$ he will throughout $O(M)$ steps remain at distance
$O(M^{2/3})$ from his initial position.} to show that with probability
$1+o(1)$ the robber will not move by more than $o(M)$ in the next $O(MN)=O(M)$
steps, which suffice to finish the game. Hence, the expected capture time is
easy to calculate.

\begin{itemize}
\item With probability $(1-c+o(1))/N$, the robber starts on the same ray as
the cop but farther away from the center. Conditioning on this event, the
expected capture time is $M(1-c+o(1))/2$.

\item With probability $(c+o(1))/N$, the robber starts on the same ray as the
cop but closer to the center. Conditioning on this event, the expected capture
time is $M(c+o(1))/2$.

\item With probability $(N-1+o(1))/N$, the robber starts on different ray than
the cop. Conditioning on this event, the expected capture time is $(c+o(1))M +
M(1/2+o(1))$.
\end{itemize}

\noindent It follows that the expected capture time is
\[
(1+o(1))M\left(  \frac{1-c}{N}\cdot\frac{1-c}{2}+\frac{c}{N}\cdot\frac{c}%
{2}+\frac{N-1}{N}\cdot\frac{2c+1}{2}\right)  ,
\]
which is maximized for $c=0$, giving $dct\left(  S_{N,M}\right)  =\left(
1+o\left(  1\right)  \right)  \frac{M}{2}$.

\noindent\underline{\textbf{(ii) Computing}\emph{ }$dct_{i}\left(
S_{N,M}\right)  $}. The initial placement for the robber is the same as in the
visible variant, that is, the uniform distribution is used. However, since the
robber is now invisible, C has to check all rays. As before, by Chernoff
bounds, with probability at least $1-e^{-cM^{1/3}}$ (for some constant $c>0$)
during $O(M)$ steps the robber is always within distance $O(M^{2/3})$ from its
initial position. If the robber starts at distance $\omega(M^{2/3})$ from the
center, he will thus with probability at least $1-e^{-cM^{1/3}}$ not change
his ray during $O(M)$ steps. Otherwise, he might change from one ray to the
other with bigger probability, but note that this happens only with the
probability of the robber starting at distance $O(M^{2/3})$ from the center,
and thus with probability at most $O(M^{-1/3})$. Keeping these remarks in
mind, let us examine \textquotedblleft reasonable\textquotedblright%
\ C\ strategies. It turns out there exist three such.

\noindent\textbf{(ii.1) }Suppose C starts at the end of one ray (chosen
arbitrarily), goes to the center, and then successively checks the remaining
rays without repetition, with probability at least $1-O(M^{-1/3})$, the robber
will be caught. If the robber does not switch rays (and is therefore caught),
the capture time is calculated as follows:

\begin{itemize}
\item With probability $(1+o(1))/N$, the robber starts on the same ray as the
cop. Conditioning on this event, the expected capture time is $(1+o(1)) M/2$.

\item With probability $(1+o(1))/N$, the robber starts on the $j$-th ray
visited by the cop. Conditioning on this event, the expected capture time is
$(1+o(1)) (M+2M(j-2)+M/2)$. ($M$ steps are required to move from the end of
the first ray to the center, $2M$ steps are `wasted' to check $j-2$ rays, and
$M/2$ steps are needed to catch the robber on the $j$-th ray, on expectation.)
\end{itemize}

\noindent\noindent Hence, conditioned under not switching rays, the expected
capture time in this case is
\begin{align*}
&  (1+o(1))\frac{M}{N}\left(  \frac{1}{2}+\left(  1+\frac{1}{2}\right)
+\left(  3+\frac{1}{2}\right)  +\ldots+\left(  1+2(N-2)+\frac{1}{2}\right)
\right)  \\
&  =(1+o(1))\frac{M}{N}\left(  \frac{1}{2}+\left(  2\cdot1-\frac{1}{2}\right)
+\left(  2\cdot2-\frac{1}{2}\right)  +\ldots+\left(  2(N-1)-\frac{1}%
{2}\right)  \right)  \\
&  =(1+o(1))\frac{M}{N}\left(  \frac{1}{2}+\frac{2N-1}{2}\cdot(N-1)\right)  \\
&  =(1+o(1))\frac{M}{2}\cdot\frac{(2N-1)(N-1)+1}{N}.
\end{align*}
Otherwise, if the robber is not caught, C just randomly checks rays: starting
from the center, C chooses a random ray, goes until the end of the ray,
returns to the center, and continues like this, until the robber is caught.
The expected capture time in this case is
\[
\sum_{j\geq1}\left(  (1-\frac{1}{N})^{j-1}\frac{1}{N}\left(
2(j-1)M+M/2\right)  \right)  =O(MN)=O(M).
\]
Since this happens with probability $O(M^{-1/3})$, the contribution of the
case where the robber switches rays is $o(M)$, and therefore for this strategy
of $C$, the expected capture time is
\[
(1+o(1))\frac{M}{2}\cdot\frac{(2N-1)(N-1)+1}{N}.
\]
\noindent\textbf{(ii.2) }Now suppose $C$ starts at the center of the ray,
rather than the end, and checks all rays from there. By the same arguments as
before, the capture time is%
\[
(1+o(1))\frac{M}{N}\left(  \frac{1}{2}+\left(  2+\frac{1}{2}\right)  +\left(
4+\frac{1}{2}\right)  +\ldots+\left(  2+2(N-2)+\frac{1}{2}\right)  \right)
\]
which is worse than in the case when starting at the end of a ray.

\noindent\textbf{(ii.3) }Similarly, suppose the cop starts at distance $cM$
from the center, for some $c\in\lbrack0,1]$. If he first goes to the center of
the ray, and then checks all rays (suppose the one he came from is the last to
be checked), then the capture time is
\begin{align*}
(1+o(1))\frac{M}{N} &  \left(  \frac{c^{2}}{2}+\left(  c+\frac{1}{2}\right)
+\left(  c+2+\frac{1}{2}\right)  +\ldots+\right.  \\
&  \left(  \left(  c+2(N-2)+\frac{1}{2}\right)  +(1-c)\left(  2c+2(N-1)+\frac
{1-c}{2}\right)  \right)  ,
\end{align*}
which is minimized for $c=1$. And if C goes first to the end of the ray, and
then to the center, the capture time is
\begin{align*}
(1+o(1))\frac{M}{N} &  \left(  \frac{((1-c)^{2}}{2}+c\left(  2(1-c)+\frac
{c}{2}\right)  +\left(  2(1-c)+c+\frac{1}{2}\right)  +\ldots+\right.  \\
&  \left(  \left(  2(1-c)+c+2(N-2)+\frac{1}{2}\right)  \right)  ,
\end{align*}
which for $N\geq2$ is also minimized for $c=1$ (in fact, for $N=2$ the numbers
are equal).

In short, the smallest capture time is achieved when C starts at the end of
some ray and therefore
\[
dct_{i}(S_{N,M})=(1+o(1))\frac{M}{2}\cdot\frac{(2N-1)(N-1)+1}{N}.
\]

\noindent\underline{\textbf{(iii) Computing }$H_{d}\left(  S_{N,M}\right)  $}.
It follows that for all $N\in\mathbb{N}-\{1\}$ we have
\[
H_{d}\left(  S_{N,M}\right)  =\frac{dct_{i}\left(  S_{N,M}\right)
}{dct\left(  S_{N,M}\right)  }=(1+o(1))\frac{(2N-1)(N-1)+1}{N}\geq2N-3,
\]
completing the proof.
\end{proof}

\subsection{Cost of Visibility in the Edge CR\ Games}

The cost of visibility in the edge CR\ games is defined analogously to that of
node games.

\begin{definition}
\label{prp0403}For every $G$, the \emph{edge adversarial cost of visibility}
is $\overline{H}_{a}(G)=\frac{\overline{ct}_{i}(G)}{\overline{ct}(G)}$ and the
\emph{edge drunk cost of visibility} is defined as $\overline{H}_{d}%
(G)=\frac{\overline{dct}_{i}(G)}{\overline{dct}(G)}$.
\end{definition}

Clearly, for every $G$ we have $\overline{H}_{a}\left(  G\right)  \geq1$ and
$\overline{H}_{d}\left(  G\right)  \geq1$. The following theorems show that in
fact both $\overline{H}_{a}\left(  G\right)  $ and $\overline{H}_{d}\left(
G\right)  $ can become arbitrarily large. To prove these theorems we will use
the previously introduced star graph $S_{N,1}$ and its line graph which is the
clique $K_{N}$. These graphs are illustrated in Figure \ref{fig004a} for
$N=6$. \begin{figure}[H]
\centering\scalebox{0.3}{\includegraphics{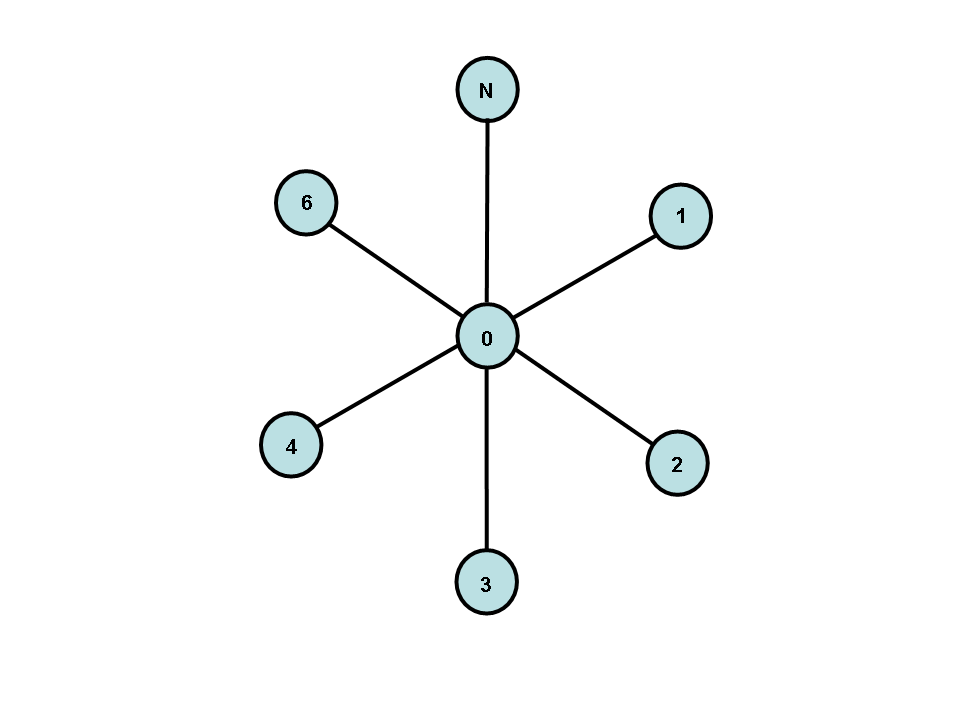}}\scalebox{0.3}{\includegraphics{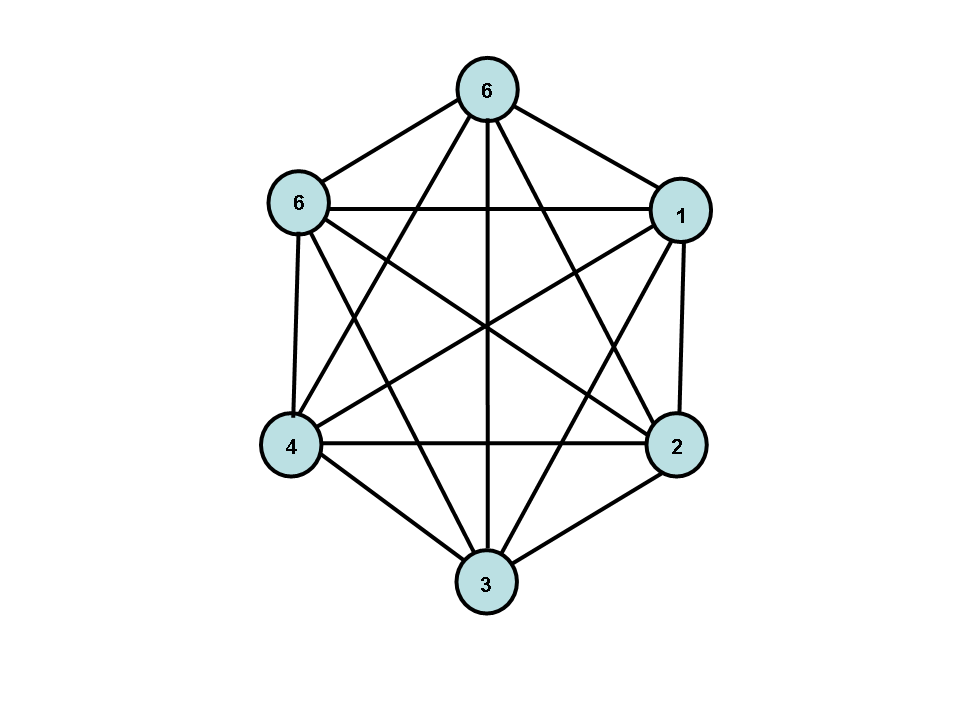}}\caption{The
\emph{star }graph $S_{6,1}$ and its line graph, the clique $K_{6}$.}%
\label{fig004a}%
\end{figure}

\begin{theorem}
\label{prp0404}For every $N\in\mathbb{N}-\{1\}$ we have $\overline{H}%
_{a}\left(  S_{N,1}\right)  =N-1$.
\end{theorem}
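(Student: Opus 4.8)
The plan is to exploit the edge-to-line-graph reduction of Section \ref{sec0305}. Since every edge of the star $S_{N,1}$ is incident to the central node, any two edges are adjacent, so the line graph $L(S_{N,1})$ is the complete graph $K_N$; consequently $\overline{ct}(S_{N,1})=ct(K_N)$ and $\overline{ct}_i(S_{N,1})=ct_i(K_N)$, and therefore $\overline{H}_a(S_{N,1})=ct_i(K_N)/ct(K_N)$. The theorem thus reduces to computing the visible and invisible adversarial capture times of $K_N$. The visible value is immediate: $K_N$ has a universal vertex, so $c(K_N)=1$; the robber must place off the cop's node (else it is caught at time $0$), and a single cop then slides onto the visible robber in one move, giving $ct(K_N)=1$. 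Throughout we play with $K=c(K_N)=1$ cop.

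For the invisible case I would prove $ct_i(K_N)=N-1$. The key structural observation is that at the end of any round in which capture has not occurred, the robber occupies one of the $N-1$ nodes distinct from the cop's current node $X_t$, and — because $K_N$ is complete — on its next move the robber may relocate to \emph{any} node, in particular it can distribute itself uniformly over the $N-1$ nodes different from the cop's (new) position, independently of where it stood before. I first establish the lower bound $ct_i(K_N)\ge N-1$ via this uniform robber strategy: at every round the robber is uniform over its $N-1$ admissible nodes, so whichever single node the cop slides to coincides with the robber's location with probability at most $1/(N-1)$. Hence against this fixed robber the best a cop can achieve is a per-round capture probability $1/(N-1)$, the survival time is geometric with that parameter, and $\inf_{s_C}E\!\left(T^{(1)}(s_C,s_R|K_N)\right)\ge N-1$.

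For the matching upper bound $ct_i(K_N)\le N-1$ I would give the symmetric cop strategy: knowing only that the uncaught robber sits at one of the $N-1$ nodes distinct from $X_t$, the cop slides to a node chosen uniformly at random among those $N-1$ candidates. Since the robber's actual location $Y_{t-1}$ necessarily lies in this candidate set, the cop's guess is correct with probability exactly $1/(N-1)$, independently of the robber's strategy and of the entire history; so against every robber the capture time is geometric with parameter $1/(N-1)$ and mean $N-1$, giving $\inf_{s_C}\sup_{s_R}E(T^{(1)})\le N-1$. Combining the two bounds with the general inequality $\sup_{s_R}\inf_{s_C}\le\inf_{s_C}\sup_{s_R}$ yields $ct_i(K_N)=N-1$, whence $\overline{H}_a(S_{N,1})=ct_i(K_N)/ct(K_N)=N-1$.

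The routine parts — the one-move visible capture and the geometric mean $\sum_{k\ge1}k(1-p)^{k-1}p=1/p$ with $p=1/(N-1)$ — are harmless. The step requiring the most care is the invisible min-max argument. One must verify that the cop's belief support is always precisely the $N-1$ non-cop nodes, so that neither bound incurs an off-by-one at the placement round (the robber's initial placement is itself uniform over these $N-1$ nodes, making the very first round an honest $1/(N-1)$ trial), and that completeness of $K_N$ genuinely permits the robber to reset to the uniform distribution over its admissible nodes each round regardless of its previous position. Once the per-round capture probability is pinned to exactly $1/(N-1)$ from both sides, independence across rounds makes the geometric conclusion — and hence the final ratio — immediate.
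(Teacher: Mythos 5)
Your proposal is correct and follows essentially the same route as the paper's proof: reduce via the line graph to $\overline{H}_a(S_{N,1})=ct_i(K_N)/ct(K_N)$, note $ct(K_N)=1$, and pin down $ct_i(K_N)=N-1$ by a uniformly random cop strategy for the upper bound and a re-randomizing uniform robber strategy for the lower bound. If anything, your write-up makes the per-round capture probability of exactly $1/(N-1)$ and the resulting geometric argument more explicit than the paper does.
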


\begin{proof}
We have $\overline{H}_{a}(S_{N,1})=\frac{\overline{ct}_{i}(S_{N,1})}%
{\overline{ct}(S_{N,1})}=\frac{ct_{i}(K_{N})}{ct(K_{N})}$ and, since $N\geq2$,
clearly $ct(K_{N})=1$. Let us now compute $ct_{i}(K_{N})$.

For an upper bound on $ct_{i}(K_{N})$, C might just move to a random vertex.
If the robber stays still or if he moves to a vertex different from the one
occupied by C, he will be caught in the next step with probability $1/(N-1)$,
and thus an upper bound on the capture time is $N-1$.

For a lower bound, suppose that the robber always moves to a randomly chosen
vertex, different from the one occupied by $C$, and including the one occupied
by him now (that is, with probability $1/(N-1)$ he stands still, and after his
turn, he is with probability $1/(N-1)$ at each vertex different from the
vertex occupied by $C$. Hence $C$ is forced to move, and since he has no idea
where to go, the best strategy is also to move randomly, and the robber will
be caught with probability $1/(N-1)$, yielding a lower bound on the capture
time of $N-1$. Therefore
\[
ct_{i}\left(  K_{N}\right)  =N-1,
\]

Hence
\[
\overline{H}_{a}(S_{N,1})=\frac{\overline{ct}_{i}(S_{N,1})}{\overline
{ct}(S_{N,1})}=\frac{ct_{i}(K_{N})}{ct(K_{N})}=N-1.
\]

\end{proof}

\begin{theorem}
\label{prp0404b}For every $N\in\mathbb{N}-\{1\}$ we have $\overline{H}%
_{d}\left(  S_{N,1}\right)  =\frac{N(N-1)}{2N-3}$.
\end{theorem}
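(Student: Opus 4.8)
The plan is to exploit the reduction $\overline{H}_{d}(S_{N,1})=\frac{\overline{dct}_{i}(S_{N,1})}{\overline{dct}(S_{N,1})}=\frac{dct_{i}(K_{N})}{dct(K_{N})}$, just as in the adversarial case (Theorem \ref{prp0404}), so that everything reduces to computing the drunk visible and drunk invisible capture times on the complete graph $K_{N}$. Since $c(K_{N})=1$, both quantities are defined with a single cop. On $K_{N}$ the drunk robber performs a random walk in which, from any vertex, the next vertex is chosen uniformly among the $N-1$ neighbors (there are no self-loops), so from C's point of view the robber's position at each step is essentially a uniformly random vertex, subject to the constraint that it differs from its previous location.

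First I would compute $dct(K_{N})$, the drunk \emph{visible} capture time. Here C sees the robber at each step. The key observation is that from any configuration, C sits on some vertex and the robber on another; when C moves toward (onto) the robber's current vertex, the robber simultaneously hops to a uniformly random neighbor, so a single cop step catches the robber only if the robber happens to land where the cop moves. I expect the cleanest approach is to set up a one-step recurrence for the expected remaining capture time as a function of the (symmetric) relative position of cop and robber; by vertex-transitivity of $K_{N}$ there are essentially only two states (cop and robber coincident = captured, or cop and robber distinct), so the expected capture time $dct(K_{N})$ satisfies a simple geometric-type recurrence whose solution should come out to a clean closed form in $N$.

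Next I would compute $dct_{i}(K_{N})$, the drunk \emph{invisible} capture time, which is the genuinely harder part. Now C must minimize $E(\sum_{t}\mathbf{1}(Y_{t}\notin X_{t}))$ over strategies that cannot observe $Y_{t}$. The standard POMDP framework (Theorem \ref{prp0310b}) guarantees an optimal strategy exists; the work is to identify it and evaluate its cost. By the symmetry of $K_{N}$ I expect the optimal (or an optimal) C strategy to reduce to C moving to a uniformly random vertex each turn, and the analysis to track the probability distribution of the robber over the vertices C has not occupied, again using that the robber's walk keeps its marginal close to uniform on the non-cop vertices. The computation should yield $dct_{i}(K_{N})$ as a closed form, and dividing by $dct(K_{N})$ should give $\frac{N(N-1)}{2N-3}$.

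The main obstacle will be the invisible computation in step three: unlike the adversarial invisible case, where the robber is an intelligent minimizer of capture probability, here the robber's motion is a fixed random walk and C must optimally exploit the induced belief distribution, so I must verify both that the symmetric ``move to a random unoccupied vertex'' strategy is optimal (no asymmetric or history-dependent strategy does better) and that the resulting belief dynamics can be summed in closed form. I would handle optimality by a symmetry/exchangeability argument over the vertices combined with the convexity of the POMDP value function, and handle the summation by recognizing the series as a weighted sum over the turn at which the cop first lands on the robber's current vertex. Once both capture times are in hand, the final ratio and the identity $\overline{H}_{d}(S_{N,1})=\frac{N(N-1)}{2N-3}$ follow immediately.
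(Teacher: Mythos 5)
Your reduction to the complete graph via $\overline{H}_{d}(S_{N,1})=\frac{\overline{dct}_{i}(S_{N,1})}{\overline{dct}(S_{N,1})}=\frac{dct_{i}(K_{N})}{dct(K_{N})}$ is exactly the paper's first step, but the rest of your plan rests on a wrong model of the game's timing, and this error is fatal to both computations. You treat the cop's and robber's moves as simultaneous (``when C moves toward (onto) the robber's current vertex, the robber simultaneously hops to a uniformly random neighbor, so a single cop step catches the robber only if the robber happens to land where the cop moves''). In this paper the moves are sequential: in each turn the cop moves first, and moving onto the robber's current vertex already effects capture; only afterwards does a surviving robber move, and he can then be captured a second way, by stumbling onto the cop's vertex (compare the proof of Theorem \ref{prp0402}: ``at $t=1$, he is captured by the cop moving into $u$''). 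Consequently, in the visible game on $K_{N}$ there is no geometric recurrence at all: with probability $1/N$ the robber is placed on the cop's vertex and is caught at time $0$; otherwise the cop sees him and steps onto him at time $1$, giving $dct(K_{N})=1-1/N<1$. Under your simultaneous model the cop can never catch the drunk robber by stepping onto him (on $K_{N}$ the drunk robber never stays put), the visible capture time becomes $\Theta(N)$, and --- worse --- visibility becomes almost worthless, so your final ratio would come out close to $1$ rather than the claimed $\frac{N(N-1)}{2N-3}\approx N/2$.

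The same timing error wrecks the invisible computation. The paper's value $dct_{i}(K_{N})=\frac{(N-1)^{2}}{2N-3}$ comes precisely from the fact that a constantly moving cop gets \emph{two} capture chances per round: his own move lands on the robber with probability $\frac{1}{N-1}$, and, failing that, the robber's subsequent move lands on the cop with probability $\frac{1}{N-1}$, for a per-round success probability of $\frac{1}{N-1}+\left(1-\frac{1}{N-1}\right)\frac{1}{N-1}=\frac{2N-3}{(N-1)^{2}}$, whence a geometric capture time. Your plan accounts for only one capture mechanism (the cop landing on the robber), which gives success probability roughly $\frac{1}{N-1}$ per round and a closed form in which the characteristic factor $2N-3$ cannot appear. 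So before any of the finer issues you flag (optimality of the symmetric strategy, belief dynamics, convexity of the POMDP value function --- points on which the paper itself is quite informal), you must correct the move order and the capture convention; with those fixed, both capture times fall out in a few lines, essentially as in the paper.
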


\begin{proof}
This is quite similar to the adversarial case. We have $\overline{H}%
_{d}(S_{N,1})=\frac{\overline{dct}_{i}(S_{N,1})}{\overline{dct}(S_{N,1}%
)}=\frac{dct_{i}(K_{N})}{dct(K_{N})}$. Clearly we have $dct(K_{N})=1-1/N$
(with probability $1/N$ the robber selects the same vertex to start with as
the cop and is caught before the game actually starts; otherwise is caught in
the first round).

For $dct_{i}(K_{N})$, it is clear that the strategy of constantly moving is
best for the cop, as in this case there are two chances to catch the robber
(either by moving towards him, or by afterwards the robber moving onto the
cop). It does not matter where he moves to as long as he keeps moving, and we
may thus assume that he starts at some vertex $v$ and moves to some other
vertex $w$ in the first round, then comes back to $v$ and oscillates like that
until the end of the game. When the cop moves to another vertex, the
probability that the robber is there is $1/(N-1)$. If he is still not caught,
the robber moves to a random place, thereby selecting the vertex occupied by
the cop with probability $1/(N-1)$. Hence, the probability to catch the robber
in one step is $\frac{1}{N-1}+(1-\frac{1}{N-1})\frac{1}{N-1}=\frac
{2N-3}{(N-1)^{2}}$. Thus, this time the capture time is a geometric random
variable with probability of success equal to $\frac{2N-3}{(N-1)^{2}}$. We get
$dct_{i}(K_{N})=\frac{(N-1)^{2}}{2N-3}$ and so
\[
\overline{H}_{d}(S_{N,1})=\frac{\overline{dct}_{i}(S_{N,1})}{\overline
{dct}(S_{N,1})}=\frac{dct_{i}(K_{N})}{dct(K_{N})}=\frac{(N-1)^{2}%
/(2N-3)}{(N-1)/N}=\frac{N(N-1)}{2N-3},
\]
which can become arbitrarily large by appropriate choice of $N$.
\end{proof}

\section{Algorithms for COV\ Computation\label{sec05}}

For graphs of relatively simple structure (e.g., paths, cycles, full trees,
grids) capture times and optimal strategies can be found by analytical
arguments \cite{kehagias2013cops,kehagias2012some}. For more complicated
graphs, an algorithmic solution becomes necessary. In this section we present
algorithms for the computation of capture time in the previously introduced
node CR variants. The same algorithms can be applied to the edge variants by
replacing $G$ with $L\left(  G\right)  $.

\subsection{Algorithms for Visible Robbers\label{sec0501}}

\subsubsection{Algorithm for Adversarial Robber\label{sec050101}}

The av-CR capture time $ct(G)$ can be computed in polynomial time. In fact,
stronger results have been presented by Hahn and MacGillivray; in
\cite{hahn2006note} they present an algorithm which, given $K$, computes for
every $\left(  x,y\right)  \in V^{2}$ the following:

\begin{enumerate}
\item $C\left(  x,y\right)  $, the optimal game duration when the cop/robber
configuration is $(x,y)$ and it is C's turn to play;

\item $R\left(  x,y\right)$, the optimal game duration when the cop/robber
configuration is $(x,y)$ and it is R's turn to play.\footnote{When $K<c(G)$,
there exist $\left(  x,y\right)  $ such that $C\left(  x,y\right)  =R\left(
x,y\right)  =\infty$; Hahn and MacGillivray's algorithm computes this
correctly, as well.}
\end{enumerate}

\noindent The av-CR capture time can be computed by $ct(G)=\min_{x\in V}%
\max_{y\in V}C\left(  x,y\right)  $; the optimal search strategies
$\widehat{\sigma}_{C}$, $\widehat{\sigma}_{R}$ can also be easily obtained
from the \emph{optimality equations}, as will be seen a little later. We have
presented in \cite{kehagias2012some} an implementation of Hahn and
MacGillivray's algorithm, which we call CAAR (Cops Against Adversarial
Robber). Below we present the algorithm for the case of a single cop (the
generalization for more than one cop is straightforward).

\newpage

\begin{center}
----------------------------------------------------------------------------------------------------

\underline{\textbf{The Cops Against Adversarial Robber (CAAR) Algorithm}}
\end{center}

\textbf{Input}: $G=(V,E)$

\texttt{01\qquad For All }$\left(  x,y\right)  \in V_{D}^{2}$

\texttt{02\qquad\qquad}$C^{\left(  0\right)  }\left(  x,y\right)  =0$

\texttt{03\qquad\qquad}$R^{\left(  0\right)  }\left(  x,y\right)  =0$

\texttt{04\qquad EndFor}

\texttt{05\qquad For All }$\left(  x,y\right)  \in V^{2}-V_{D}^{2}$

\texttt{06\qquad\qquad}$C^{\left(  0\right)  }\left(  x,y\right)  =\infty$

\texttt{07\qquad\qquad}$R^{\left(  0\right)  }\left(  x,y\right)  =\infty$

\texttt{08\qquad EndFor}

\texttt{09\qquad}$i=1$

\texttt{10\qquad While }$1>0$

\texttt{11\qquad\qquad For All }$\left(  x,y\right)  \in V^{2}-V_{D}^{2}$

\texttt{12\qquad\qquad\qquad}$C^{\left(  i\right)  }\left(  x,y\right)
=1+\min_{x^{\prime}\in N\left[  x\right]  }R^{\left(  i-1\right)  }\left(
x^{\prime},y\right)  $

\texttt{13\qquad\qquad\qquad}$R^{\left(  i\right)  }\left(  x,y\right)
=1+\max_{y^{\prime}\in N\left[  y\right]  }C^{\left(  i\right)  }\left(
x,y^{\prime}\right)  $

\texttt{14\qquad\qquad EndFor}

\texttt{15\qquad\qquad If }$C^{\left(  i\right)  }=C^{\left(  i-1\right)  }%
$\texttt{ And}$\ R^{\left(  i\right)  }=R^{\left(  i-1\right)  }$

\texttt{16\qquad\qquad\qquad Break}

\texttt{17\qquad\qquad EndIf }

\texttt{18\qquad\qquad}$i\leftarrow i+1$

\texttt{19\qquad EndWhile}

\texttt{20\qquad}$C=C^{\left(  i\right)  }$

\texttt{21\qquad}$R=R^{\left(  i\right)  }$

\textbf{Output}\texttt{: }$C$\texttt{, }$R$

\begin{center}
----------------------------------------------------------------------------------------------------
\end{center}

The algorithm operates as follows. In lines \texttt{01-08} $C^{\left(
0\right)  }\left(  x,y\right)  $ and $R^{\left(  0\right)  }\left(
x,y\right)  $ are initialized to $\infty$, except for \textquotedblleft
diagonal\textquotedblright\ positions $\left(  x,y\right)  \in V_{D}^{2}$
(i.e., positions with $x=y$) for which we obviously have $C\left(  x,x\right)
=R\left(  x,x\right)  =0$. Then a loop is entered (lines \texttt{10-19}) in
which $C^{\left(  i\right)  }\left(  x,y\right)  $ is computed (line
\texttt{12}) by letting the cop move to the position which achieves the
smallest capture time (according to the currently available estimate
$R^{\left(  i-1\right)  }\left(  x,y\right)  $); $R^{\left(  i\right)
}\left(  x,y\right)  $ is computed similarly in line \texttt{13}, looking for
the largest capture time. This process is repeated until no further changes
take place, at which point the algorithm exits the loop and 
terminates\footnote{This algorithm is a game theoretic version of \emph{value iteration}
\cite{puterman1994markov}, which we see again in Section \ref{sec0502}.}. 
It has been proved in \cite{hahn2006note} that, for any graph $G$ and any
$K\in\mathbb{N}$, CAAR\ always terminates and the finally obtained $\left(
C,R\right)  $ pair satisfies the \emph{optimality equations}
\begin{align}
\forall\left(  x,y\right)   &  \in V_{D}^{2}:C\left(  x,y\right)
=0;\quad\forall\left(  x,y\right)  \in V^{2}-V_{D}^{2}:C\left(  x,y\right)
=1+\min_{x^{\prime}\in N\left[  x\right]  }R\left(  x^{\prime},y\right)
,\label{eq0502}\\
\forall\left(  x,y\right)   &  \in V_{D}^{2}:R\left(  x,y\right)
=0;\quad\forall\left(  x,y\right)  \in V^{2}-V_{D}^{2}:R\left(  x,y\right)
=1+\max_{y^{\prime}\in N\left[  y\right]  }C\left(  x,y^{\prime}\right)
.\label{eq0504}%
\end{align}
The optimal memoryless strategies $\sigma_{C}^{\left(  K\right)  }\left(
x,y\right)  $, $\sigma_{R}^{\left(  K\right)  }\left(  x,y\right)  $ can be
computed for every position $(x,y)$ by letting $\sigma_{C}^{\left(  K\right)}\left(  x,y\right)$ 
(resp. $\sigma_{R}^{\left(  K\right)  }\left(
x,y\right)  $ ) be a node $x^{\prime}\in N\left[  x\right]$ 
(resp. $y^{\prime}\in N\left[  y\right]$) which achieves the minimum in
(\ref{eq0502}) (resp. maximum in (\ref{eq0504})). The capture time $ct(G)$ is
computed from%
\[
ct\left(  G\right)  =\min_{x\in V}\max_{y\in V}C\left(  x,y\right)  .
\]

\subsubsection{Algorithm for Drunk Robber\label{sec050102}}

For any given $K$, \emph{value iteration}\ can be used to determine both
$dct\left(  G,K\right)  $ and the optimal strategy $\sigma_{C}^{\left(
K\right)  }\left(  x,y\right)  $; one implementation is our CADR\ (Cops
Against Drunk Robber) algorithm \cite{kehagias2012some} which is a typical
value-iteration \cite{puterman1994markov} MDP algorithm; alternatively, CADR
can be seen as an extension of the CAAR idea to the dv-CR. Below we present
the algorithm for the case of a single cop (the generalization for more than
one cops is straightforward).

\begin{center}
----------------------------------------------------------------------------------------------------

\underline{\textbf{The Cops Against Drunk Robber (CADR) Algorithm}}
\end{center}

\textbf{Input}: $G=(V,E)$, $\varepsilon$

\texttt{01 For All }$\left(  x,y\right)  \in V_{D}^{2}$

\texttt{02 } $\ C^{\left(  0\right)  }\left(  x,y\right)  =0$

\texttt{03 EndFor}

\texttt{04 For All }$\left(  x,y\right)  \in V-V_{D}^{2}$

\texttt{05 } $\ C^{\left(  0\right)  }\left(  x,y\right)  =\infty$

\texttt{06 EndFor}

\texttt{07 }$i=1$

\texttt{08 While }$1>0$

\texttt{09\qquad For All }$\left(  x,y\right)  \in V-V_{D}^{2}$

\texttt{10\qquad\qquad}$C^{\left(  i\right)  }\left(  x,y\right)
=1+\min_{x^{\prime}\in N\left[  x\right]  }\sum_{y^{\prime}\in V}P\left(
\left(  x^{\prime},y\right)  \rightarrow\left(  x^{\prime},y^{\prime}\right)
\right)  C^{\left(  i-1\right)  }\left(  x^{\prime},y^{\prime}\right)  $

\texttt{11\qquad EndFor}

\texttt{12\qquad If }$\max_{\left(  x,y\right)  \in V^{2}}\left\vert
C^{\left(  i\right)  }\left(  x,y\right)  -C^{\left(  i-1\right)  }\left(
x,y\right)  \right\vert <\varepsilon$

\texttt{13\qquad\qquad Break}

\texttt{14\qquad EndIf }

\texttt{15\qquad}$i\leftarrow i+1$

\texttt{16 EndWhile}

\texttt{17 }$C=C^{\left(  i\right)  }$

\textbf{Output}\texttt{: }$C$

\begin{center}
----------------------------------------------------------------------------------------------------
\end{center}

The algorithm operates as follows (again we use $C\left(  x,y\right)  $ to
denote the optimal expected game duration when the game position is $\left(
x,y\right)  $). In lines \texttt{01-06} $C^{\left(  0\right)  }\left(
x,y\right)  $ is initialized to $\infty$, except for \textquotedblleft
diagonal\textquotedblright positions $\left(  x,y\right) \in V_{D}^{2}$. In
the main loop (lines \texttt{08-16}) $C^{\left(  i\right)  }\left(
x,y\right)  $ is computed (line \texttt{10}) by letting the cop move to the
position which achieves the smallest expected capture time ($P\left(  \left(
x^{\prime},y\right)  \rightarrow\left(  x^{\prime},y^{\prime}\right)  \right)
$ in line \texttt{10} indicates the transition probability from$\left(
x^{\prime},y\right)  $ to $\left(  x^{\prime},y^{\prime}\right)  $). This
process is repeated until the maximum change $\left\vert C^{\left(  i\right)
}\left(  x,y\right)  -C^{\left(  i-1\right)  }\left(  x,y\right)  \right\vert
$ is smaller than the \emph{termination criterion} $\varepsilon$, at which
point the algorithm exits the loop and terminates. This is a typical
\emph{value iteration} MDP algorithm \cite{puterman1994markov}; the
convergence of such algorithms has been studied by several authors, in various
degrees of generality \cite{de1980optimal,eaton1962optimal,howard1971dynamic}.
A simple yet strong result, derived in \cite{eaton1962optimal}, uses the
concept of \emph{proper} strategy: a strategy is called proper if it yields
finite expected capture time. It is proved in \cite{eaton1962optimal} that, if
a proper strategy exists for graph G, then CADR-like algorithms converge. 
In the case of dv-CR we know that C has a proper strategy: it is the
random walking strategy $\overline{s}^{(K)}_C$ mentioned in Theorem \ref{prp0307}.
Hence CADR converges and in the limit, $C=\lim_{i\rightarrow\infty}C^{\left(  i\right)  }$ 
satisfies the \emph{optimality equations}
\begin{equation}
\forall\left(  x,y\right)  \in V_{D}^{2}:C\left(  x,y\right)  =0;\quad
\forall\left(  x,y\right)  \in V^{2}-V_{D}^{2}:C\left(  x,y\right)
=1+\min_{x^{\prime}\in N\left[  x\right]  }\sum P\left(  \left(  x^{\prime
},y\right)  \rightarrow\left(  x^{\prime},y^{\prime}\right)  \right)  C\left(
x^{\prime},y^{\prime}\right)  .
\end{equation}
The optimal memoryless strategy $\sigma_{C}^{\left(  K\right)  }\left(
x,y\right)  $ can be computed for every position $(x,y)$ by letting
$\sigma_{C}^{\left(  K\right)  }\left(  x,y\right)  $ be a node $x^{\prime}\in
N\left[  x\right]  $ (resp. $y^{\prime}\in N\left[  y\right]$) which
achieves the minimum in (\ref{eq0502}) (resp. maximum in (\ref{eq0504})). The
capture time $dct(G)$ is computed from%
\[
dct\left(  G\right)  =\min_{x\in V}C\left(  x,y\right)  .
\]

\subsection{Algorithms for Invisible Robbers\label{sec0502}}

\subsubsection{Algorithms for Adversarial Robber\label{sec050201}}

We have not been able to find an efficient algorithm for solving the ai-CR
game. Several algorithms for \emph{imperfect information stochastic games}
could be used to this end but we have found that they are practical only for
very small graphs.

\subsubsection{Algorithm for Drunk Robber\label{sec050202}}

In the case of the drunk invisible robber we are also using a game tree search
algorithm with pruning, for which some analytical justification can be
provided. We call this the \emph{Pruned Cop Search} (PCS) algorithm. Before
presenting the algorithm we will introduce some notation and then prove a
simple fact about expected capture time. We limit ourselves to the single cop
case, since the extension to more cops is straightforward.

We let $\mathbf{x}=x_{0}x_{1}x_{2}\ldots\ $ be an infinite history of cop
moves. Letting $t$ being the current time step, the probability vector
$\mathbf{p}\left(  t\right)  $ contains the probabilities of the robber being
in node $v\in V$ or in the \emph{capture state }$n+1$; more
specifically:\ $\mathbf{p}\left(  t\right)  =\left[  p_{1}\left(  t\right)
,\ldots,p_{v}\left(  t\right)  ,\ldots,p_{n}\left(  t\right)  ,p_{n+1}\left(
t\right)  \right]  $ and $p_{v}\left(  t\right)  =\Pr\left(  y_{t}%
=v|x_{0}x_{1}\ldots x_{t}\right)  $. Hence $\mathbf{p}\left(  t\right)  $
depends (as expected) on the \emph{finite} cop history $x_{0}x_{1}\ldots
x_{t}$. The expected capture time is denoted by $C\left(  \mathbf{x}\right)
=E(T|\mathbf{x})$; the conditioning is on the infinite cop history. The
PCS\ algorithm works because $E\left(  T|\mathbf{x}\right)  $ can be
approximated from a finite part of $\mathbf{x}$, as explained below. We have
\begin{equation}
C\left(  \mathbf{x}\right)  =E\left(  T|\mathbf{x}\right)  =\sum_{t=0}%
^{\infty}t\cdot\Pr\left(  T=t|\mathbf{x}\right)  =\sum_{t=0}^{\infty}%
\Pr\left(  T>t|\mathbf{x}\right)  .\label{eq0511}%
\end{equation}
$\mathbf{x}$\ in the conditioning is the \emph{infinite} history
$\mathbf{x}=x_{0}x_{1}x_{2}\ldots$ . However, for every $t$ we have%
\[
\Pr\left(  T>t|\mathbf{x}\right)  =1-\Pr\left(  T\leq t|\mathbf{x}\right)
=1-\Pr\left(  T\leq t|x_{0}x_{1}\ldots x_{t}\right)  .
\]
Let us define
\[
C^{\left(  t\right)  }\left(  x_{0}x_{1}\ldots x_{t}\right)  =\sum_{\tau
=0}^{t}\left[  1-\Pr\left(  T\leq\tau|x_{0}x_{1}\ldots x_{\tau}\right)
\right]  =\sum_{\tau=0}^{t}\left[  1-p_{n+1}\left(  \tau\right)  \right]  ,
\]
where $p_{n+1}\left(  \tau\right)  $ is the probability that the robber is in
the \emph{capture state} $n+1$ at time $\tau$ (the dependence on $x_{0}%
x_{1}\ldots x_{\tau}$ is suppressed, for simplicity of notation). Then for
all $t$ we have
\begin{equation}
C^{\left(  t\right)  }\left(  x_{0}x_{1}\ldots x_{t}\right)  =C^{\left(
t-1\right)  }\left(  x_{0}x_{1}\ldots x_{t-1}\right)  +\left(  1-p_{n+1}%
\left(  t\right)  \right)  .\label{eq0512}%
\end{equation}
Update (\ref{eq0512}) can be computed using only the previous cost 
$C^{\left(t-1\right)  }\left(  x_{0}x_{1}\ldots x_{\tau-1}\right)  $ and the
(previously computed)\ probability vector $\mathbf{p}\left(t\right)  $.
While $C^{\left(t\right)  }\left(  x_{0}\ldots x_{t}\right)  \leq
C\left(  \mathbf{x}\right)  $, we hope that (at least for the
\textquotedblleft good\textquotedblright\ histories) we have
\begin{equation}
\lim_{t\rightarrow\infty}C^{\left(  t\right)  }\left(  x_{0}\ldots
x_{t}\right)  =C\left(  \mathbf{x}\right)  .\label{eq0513}%
\end{equation}
This actually works well in practice.

The PCS algorithm is given below in pseudocode. We have introduced a structure
$S$ with fields $S.\mathbf{x}$, $S.\mathbf{p}$, $S.C=C\left(  S.\mathbf{x}%
\right)  $. Also we denote concatenation by the $|$ symbol, i.e., $x_{0}%
x_{1}\ldots x_{t}|v=x_{0}x_{1}\ldots x_{t}v$.

\begin{center}
----------------------------------------------------------------------------------------------------

\underline{\textbf{The Pruned Cop Search (PCS) Algorithm}}
\end{center}

\textbf{Input}: $G=(V,E)$\texttt{, }$x_{0}$\texttt{, }$J_{max}$\texttt{,
}$\varepsilon$

\texttt{01 \ }$t=0$

\texttt{02 \ }$S.\mathbf{x}=x_{0}$, $S.\mathbf{p}=\Pr(y_{0}|x_{0})$, $S.C=0$

\texttt{03 \ }$\mathbf{S}=\{S\}$

\texttt{04 \ }$C_{best}^{old}=0$

\texttt{05 \ While }$1>0$

\texttt{06\qquad\qquad}$\widetilde{\mathbf{S}}=\emptyset$

\texttt{07\qquad\qquad For All }$S\in\mathbf{S}$

\texttt{08\qquad\qquad\qquad}$\mathbf{x}=S.\mathbf{x}$\texttt{, }%
$\mathbf{p}=S.\mathbf{p}$\texttt{, }$C=S.C$

\texttt{09\qquad\qquad\qquad For All }$v\in N\left[  x_{t}\right]  $

\texttt{10\qquad\qquad\qquad\qquad}$\mathbf{x}^{\prime}=\mathbf{x}|v$

\texttt{11\qquad\qquad\qquad\qquad}$\mathbf{p}^{\prime}=\mathbf{p}\cdot P(v)$

\texttt{12\qquad\qquad\qquad\qquad}$C^{\prime}=\mathbf{Cost}(\mathbf{x}%
^{\prime},\mathbf{p}^{\prime},C)$

\texttt{13\qquad\qquad\qquad\qquad}$S^{\prime}.\mathbf{x}=\mathbf{x}^{\prime}%
$\texttt{, }$S^{\prime}.\mathbf{p}=\mathbf{p}^{\prime}$\texttt{, }$S^{\prime
}.C=C^{\prime}$

\texttt{14\qquad\qquad\qquad\qquad}$\widetilde{\mathbf{S}}=\widetilde
{\mathbf{S}}\cup\{S^{\prime}\}$

\texttt{15\qquad\qquad\qquad EndFor}

\texttt{16\qquad\qquad EndFor}

\texttt{17\qquad\qquad}$\mathbf{S}=\mathbf{Prune}(\widetilde{\mathbf{S}%
},J_{max})$

\texttt{18\qquad\qquad}$[\mathbf{x}_{best},C_{best}]=\mathbf{Best}%
(\mathbf{S})$

\texttt{19\qquad\qquad If }$|C_{best}-C_{best}^{old}|<\varepsilon$

\texttt{20\qquad\qquad\qquad Break}

\texttt{21\qquad\qquad Else}

\texttt{22\qquad\qquad\qquad}$C_{best}^{old}=C_{best}$

\texttt{23\qquad\qquad\qquad}$t\leftarrow t+1$

\texttt{24\qquad\qquad EndIf}

\texttt{25 \ EndWhile}

\textbf{Output}\texttt{: }$\mathbf{x}_{best}$\texttt{, }$C_{best}=C\left(
\mathbf{x}_{best}\right)  $\texttt{.}

\begin{center}
----------------------------------------------------------------------------------------------------
\end{center}

The PCS\ algorithm operates as follows. At initialization (lines
\texttt{01-04}), we create a single $S$ structure (with $S.\mathbf{x}$ being
the initial cop position, $S.\mathbf{p}$ the initial, uniform robber
probability and $S.C=0$) which we store in the set $\mathbf{S}$. Then we enter
the main loop (lines \texttt{05-25}) where we pick each available cop sequence
$\mathbf{x}$ of length $t$ (line \texttt{08}). Then, in lines \texttt{09-15}
we compute, for all legal extensions $\mathbf{x}^{\prime}=\mathbf{x}|v$ (where
$v\in N\left[  x_{t}\right]  $) of length $t+1$ (line \texttt{10}), the
corresponding $\mathbf{p}^{\prime}$ (line \texttt{11})\ and $C^{\prime}$ (by
the subroutine $\mathbf{Cost}$ at line \texttt{12}). We store these quantities
in $S^{\prime}$ which is placed in the temporary storage set $\widetilde
{\mathbf{S}}$ (lines \texttt{13-14}). After exhausting all possible extensions
of length $t+1$, we prune the temporary set $\widetilde{\mathbf{S}}$,
retaining only the $J_{\max}$ best cop sequences (this is done in line
\texttt{17} by the subroutine \textbf{Prune} which computes \textquotedblleft
best\textquotedblright\ in terms of smallest $C\left(  \mathbf{x}\right)  $).
Finally, the subroutine $\mathbf{Best}$ in line \texttt{18} computes the
overall smallest expected capture time $C_{best}=C\left(  \mathbf{x}%
_{best}\right)  $. The procedure is repeated until the termination criterion
$|C_{best}-C_{best}^{old}|<\varepsilon$ is satisfied. As explained above, the
criterion is expected to be always eventually satisfied because of
(\ref{eq0513}).

\section{Experimental Estimation of The Cost of Visibility\label{sec06}}

We now present numerical computations of the drunk cost of visibility for
graphs which are not amenable to analytical computation\footnote{We do not
deal with the adversarial cost of visibility because, while we can compute
$ct\left(  G\right)  $ with the CAAR algorithm, we do not have an efficient
algorithm to compute $ct_{i}\left(  G\right)  $; hence we cannot perform
experiments on $H_{a}\left(  G\right)  =\frac{ct_{i}\left(  G\right)
}{ct\left(  G\right)}$. The difficulty with $ct_{i}\left(  G\right)  $ is
that ai-CR is a stochastic game of imperfect information; even for very small
graphs, one cop and one robber, ai-CR\ involves a state space with size far
beyond the capabilities of currently available stochastic games algorithms
(see \cite{raghavan1991algorithms}).}. In Section \ref{sec0601} we deal with
node games and in Section \ref{sec0602}  with edge games.

\subsection{\noindent Experiments with Node Games\label{sec0601}}

Since $H_{d}\left(  G\right)  =\frac{dct_{i}\left(  G\right)  }{dct\left(G\right)  }$, 
we use the CADR algorithm  to compute $dct\left(  G\right)$ 
and the PCS algorithm to compute $dct_{i}\left(  G\right)$. 
We use graphs $G$ obtained from
\emph{indoor environments}, which we represent by their \emph{floorplans}. In
Fig. \ref{fig060100} we present a floorplan and its graph representation. The
graph is obtained by decomposing the floorplan into convex \emph{cells},
assigning each cell to a node and connecting nodes by edges whenever the
corresponding cells are connected by an open space.

We have written a script which, given some parameters, generates random
floorplans and their graphs. Every floorplan consists of a rectangle divided
into orthogonal \textquotedblleft rooms\textquotedblright. If each internal room were
connected to its four nearest neighbors we would get an $M\times N$ grid
$G^{\prime}$. However, we randomly generate a spanning tree $G_T$ of $G^{\prime}$ 
and initially introduce doors only between rooms which are connected in
$G_T$. Our final graph $G$ is obtained from $G_T$ by iterating over all missing
edges and adding each one with probability $p_{0}\in\left[  0,1\right]  $.
Hence each floorplan is characterized by three parameters:\ $M$, $N$ and
$p_{0}$.

\begin{figure}[H]
\centering\scalebox{0.5}{\includegraphics{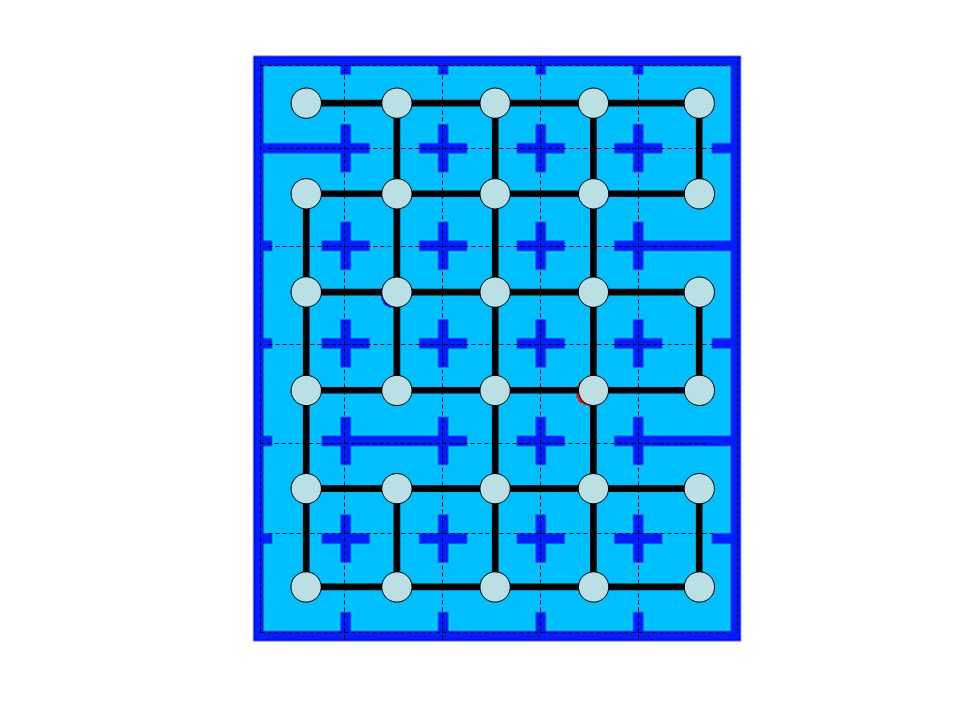}} \caption[A floorplan
and the corresponding graph.]{A floorplan and the corresponding graph.}%
\label{fig060100}%
\end{figure}

We use the following pairs of $\left(  M,N\right)  $ values:\ (1,30), (2,15),
(3,10), (4,7), (5,6). Four of these pairs give a total of 30 nodes and the pair 
($M=4$, $N=7$) gives $n=28$ nodes;  as $M/N$ increases, we progress from a
path to a nearly square grid. For each $\left(  M,N\right)  $ pair we use five
$p_{0}$ values: 0.00, 0.25, 0.50, 0.75, 1.00; note the
progression from a tree ($p_{0}=0.00$)\ to a full grid ($p_{0}=1.00$). For
each triple $\left(  M,N,p_{0}\right)  $ we generate 50 floorplans, obtain
their graphs and for each graph $G$ we compute $dct(G)$ using CADR,
$dct_{i}\left(  G\right)  $ using PCS and $H_{d}\left(  G\right)
=\frac{dct_{i}\left(  G\right)  }{dct\left(  G\right)  }$; finally we average
$H_{d}\left(  G\right)  $ over the 100 graphs. In Fig. \ref{fig060101} we plot
$dct(G)$ as a function of the probability $p_{0}$; each plotted curve
corresponds to an $\left(  M,N\right)  $ pair. Similarly, in Fig.
\ref{fig060102} we plot $dct_{i}(G)$ and in Fig. \ref{fig060103} we plot
$H_{d}(G)$.

\newpage

\begin{figure}[H]
\centering
\scalebox{0.5}{\includegraphics{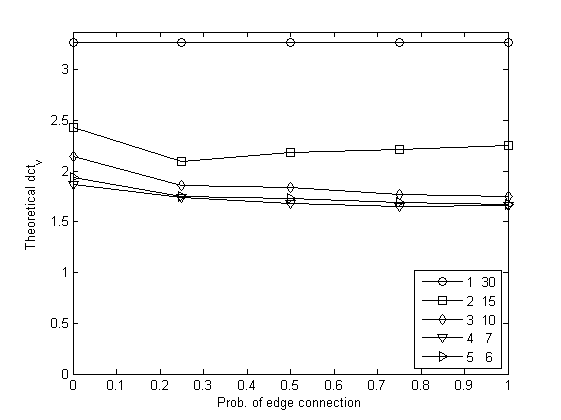}} \caption[$dct(G)$ curves for
floorplans with $n$=30 or $n$=28 cells.]{$dct(G)$ curves for floorplans with $n$=30 or $n$=28
cells. Each curve corresponds to a fixed $(M,N)$ pair. The horizontal axis
corresponds to the edge insertion probability $p_{0}$.}%
\label{fig060101}%
\end{figure}

\begin{figure}[H]
\centering
\scalebox{0.5}{\includegraphics{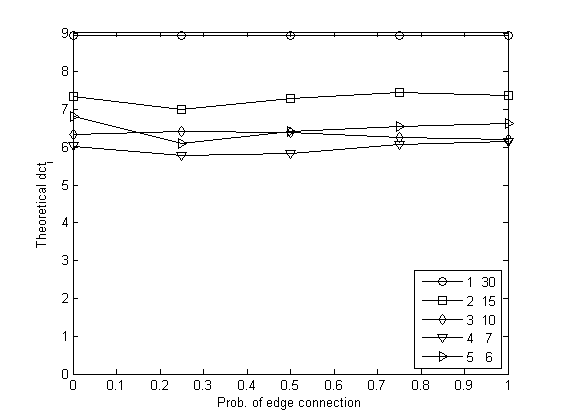}}\caption[$dct_{i}(G)$ curves
for floorplans with $n$=30 or $n$=28 cells.]{$dct_{i}(G)$ curves for floorplans with
$n$=30 or $n$=28 cells. Each curve corresponds to a fixed $(M,N)$ pair. The horizontal
axis corresponds to the edge insertion probability $p_{0}$.}%
\label{fig060102}%
\end{figure}

\begin{figure}[H]
\centering
\scalebox{0.5}{\includegraphics{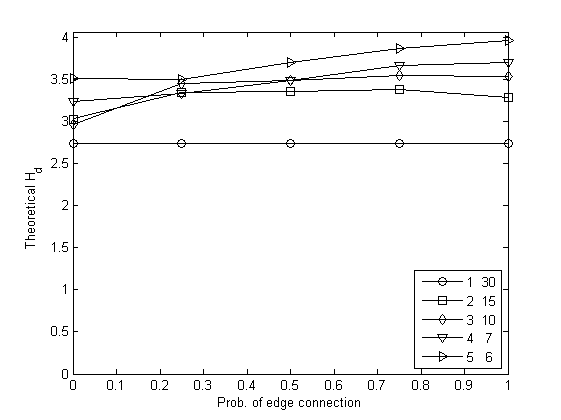}}\caption[$H_{d}(G)$ curves for
floorplans with $n$=30 or $n$=28 cells.]{$H_{d}(G)$ curves for floorplans with $n$=30 or $n$=28
cells. Each curve corresponds to a fixed $(M,N)$ pair. The horizontal axis
corresponds to the edge insertion probability $p_{0}$.}%
\label{fig060103}%
\end{figure}

\newpage

We can see in Fig. \ref{fig060101}-\ref{fig060102} that both $dct\left(
G\right)  $ and $dct_{i}\left(  G\right)$  are usually decreasing functions of 
the $M/N$ ratio. However the cost of visibility $H_{d}\left(  G\right)  $
\emph{increases} with $M/N$. This is due to the fact that, when the $M/N$ ratio is low,
$G$ is closer to a path and there is less difference in the search schedules
and capture times between dv-CR and di-CR. On the other hand, for high $M/N$
ratio, $G$ is closer to a grid, with a significantly increased ratio of 
edges to nodes  (as compared to the low $M/N$, path-like instances).
This, combined with the loss of information (visibility),
results in $H_d(G)$ being an increasing function of $M/N$.
The increase of $H_{d}\left(  G\right)$ with $p_{0}$ can be explained
in the same way, since increasing $p_0$ implies more edges and this makes the cops' task harder.

\subsection{\noindent Experiments with Edge Games\label{sec0602}}

Next we deal with 
$\overline{H}_{d}\left(  G\right)  =\frac{\overline{dct}_{i}\left(  G\right)  }{\overline{dct}\left(  G\right)  }$. 
We use graphs $G$
obtained from \emph{mazes} such as the one illustrated in Fig. \ref{fig060200}. 
Every \emph{corridor} of the maze corresponds to an edge; corridor
intersections correspond to nodes. The resulting graph $G$ is also depicted in
Fig.\ref{fig060200}. From $G$ we obtain the line graph $L(G)$, to which we
apply CADR\ to compute $dct\left(  L(G)\right)  =\overline{dct}\left(G\right)$ 
and PCS to compute  
$dct_{i}\left(  L( G)\right)  =\overline{dct}_{i}\left(  G\right)$.

\begin{figure}[H]
\centering\scalebox{0.5}{\includegraphics{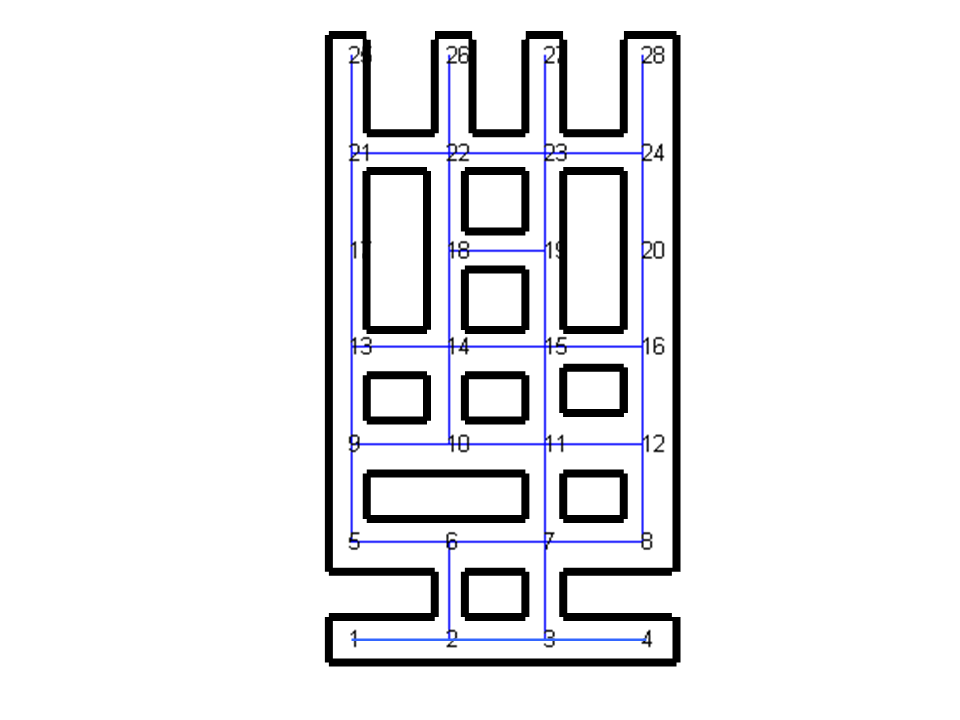}} \caption[A maze and
the corresponding graph.]{A maze and the corresponding graph.}%
\label{fig060200}%
\end{figure}

We use graphs of the same type as the ones of Section \ref{sec0601} but we now
focus on the edge-to-edge movements of cops and robber.
Hence from every $G$ (obtained by a specific $(M,N,p_0)$ triple) we produce
the line graph $L(G)$, for which we compute $H_d(L(G))$ using the CADR and PCS algorithms.
Once again we generate 50 graphs and present average $dct(G)$, $dct_{i}\left(  G\right)  $
and $H_{d}\left(  G\right)  $ results in Figures \ref{fig060201}-\ref{fig060203}. 
These figures are rather similar to Figures \ref{fig060101}-\ref{fig060103}, except that
the increase of $\overline{H}_{d}\left(  G\right)$ as a function of $M/N$ is greater 
than that of $H_{d}\left(  G\right)$. This is due to the fact that $L(G)$ has more 
nodes and edges than $G$, hence the loss of visibility makes the edge game 
significantly harder than the node game. There is one exception to the above remarks, 
namely the case $(M,N)=(1,30)$; in this case both $G$ and $L(G)$ are paths and 
$H_{d}\left(  G\right)$ is essentially equal to $\overline{H}_{d}\left(G\right)$
(as can be seen by comparing Figures \ref{fig060103} and \ref{fig060203}).

\newpage

\begin{figure}[H]
\centering
\scalebox{0.5}{\includegraphics{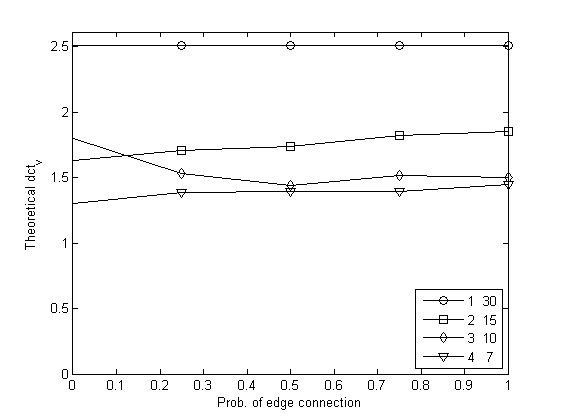}}\caption[$\overline{dct}(G)$
curves for floorplans with $n$=30 or $n$=28 cells.]{$\overline{dct}(G)$ curves for floorplans with
$n$=30 or $n$=28 cells. Each curve corresponds to a fixed $(M,N)$ pair. The horizontal
axis corresponds to the edge insertion probability $p_{0}$.}%
\label{fig060201}%
\end{figure}

\begin{figure}[H]
\centering
\scalebox{0.5}{\includegraphics{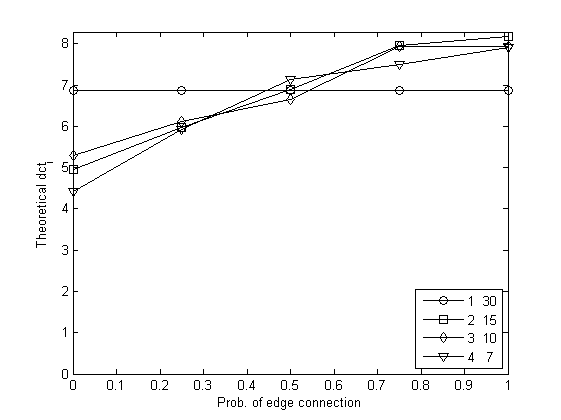}}\caption[$\overline{dct}%
_{i}(G)$ curves for floorplans with $n$=30 or $n$=28 cells.]{$\overline{dct}_{i}(G)$ curves for
floorplans with $n$=30 or $n$=28 cells. Each curve corresponds to a fixed $(M,N)$ pair.
The horizontal axis corresponds to the edge insertion probability $p_{0}$.}%
\label{fig060202}%
\end{figure}

\begin{figure}[H]
\centering
\scalebox{0.5}{\includegraphics{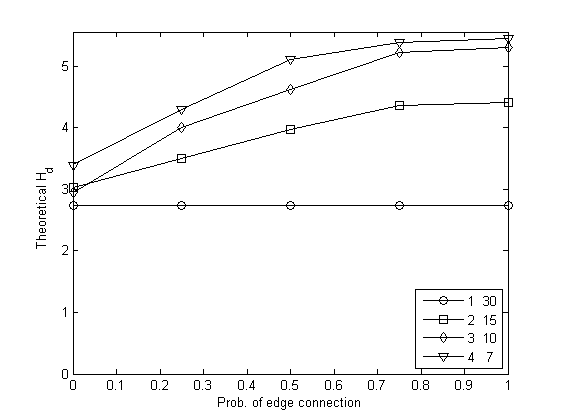}}\caption[$\overline{H}_{d}(G)$
curves for floorplans with $n$=30 or $n$=28 cells.]{$\overline{H}_{d}\left(G\right)$ curves for floorplans
with $n$=30 or $n$=28 cells. Each curve corresponds to a fixed $(M,N)$ pair. The
horizontal axis corresponds to the edge insertion probability $p_{0}$.}%
\label{fig060203}%
\end{figure}

\newpage

\section{Conclusion\label{sec07}}

In this paper we have studied two versions of the cops and robber game: the
one is played on the nodes of a graph and the other played on the edges. For
each version, we studied four variants, obtained by changing the visibility
and adversariality assumptions regarding the robber; hence we have a total of
eight CR games. For each of these we have defined \emph{rigorously} the
corresponding optimal capture time, using game theoretic and probabilistic tools.

Then, for the node games we have introduced the adversarial cost of visibility
$H\left(  G\right)  =\frac{ct_{i}\left(  G\right)  }{ct\left(  G\right)  }$
and the drunk cost of visibility $H_{d}\left(  G\right)  =\frac{dct_{i}\left(
G\right)  }{dct\left(  G\right)  }$ . These ratios quantify the increase in
difficulty of the CR\ game when the cop is no longer aware of the robber's
position (this situation occurs often in mobile robotics). 

We have defined analogous quantities ($\overline{H}\left(  G\right)  =\frac
{\overline{{ct}_{i}}\left(  G\right)  }{\overline{ct}\left(  G\right)  }$,
$\overline{H}_{d}\left(  G\right)  =\frac{\overline{dct}_{i}\left(
G\right)  }{\overline{dct}\left(  G\right)  }$) for the edge CR games.

We have studied analytically $H\left(  G\right)  $ and $H_{d}\left(  G\right)
$ and have established that both can get arbitrarily large. We have
established similar results for $\overline{H}\left(  G\right)  $ and
$\overline{H}_{d}\left(  G\right)  $. In addition, we have studied
$H_{d}\left(  G\right)  $ and $\overline{H}_{d}\left(  G\right)  $ by
numerical experiments which support both the game theoretic results of the current paper
and the analytical computations of capture times presented in \cite{kehagias2012some,kehagias2013cops}.

\bibliographystyle{amsplain}

\end{document}